\theoremstyle{plain}
\newtheorem{theorem}{Theorem}[section]
\newtheorem{lemma}[theorem]{Lemma}
\newtheorem{corollary}[theorem]{Corollary}
\newtheorem{proposition}[theorem]{Proposition}
\theoremstyle{definition}
\newtheorem{definition}[theorem]{Definition}
\newtheorem{algo-thm}{Algorithm}
\newtheorem{remark}{Remark}
\newcommand{\Romnum}[1]{\uppercase\expandafter{\romannumeral #1}}
\DeclareMathOperator{\groupofend}{End} 
\DeclareMathOperator{\tr}{Tr} 
\DeclareMathOperator{\norm}{N} 
\DeclareMathOperator{\GL}{GL}
\DeclareMathOperator{\poly}{poly}
\DeclareMathOperator{\E}{\mathbb{E}}
\DeclarePairedDelimiter{\abs}{\lvert}{\rvert}
\DeclarePairedDelimiter{\ket}{\lvert}{\rangle}
\DeclarePairedDelimiter{\bra}{\langle}{\rvert}
\DeclarePairedDelimiterX{\braket}[2]{\langle}{\rangle}{#1 \delimsize\vert #2}
\DeclarePairedDelimiter{\lrang}{\langle}{\rangle}
\DeclarePairedDelimiter{\opnorm}{\lVert}{\rVert}
\def\Q{\mathbb{Q}}
\def\C{\mathbb{C}}
\def\Z{\mathbb{Z}}
\def\F{\mathbb{F}}
\def\X{\mathcal{X}}
\title{How to Sample From The Limiting Distribution of a \\ Continuous-Time Quantum Walk}
\author{
	Javad Doliskani\thanks{Department of Computer Science, Ryerson University,
	{\tt javad.doliskani@ryerson.ca}.}
}
\date{}
\begin{document}
\maketitle

\begin{abstract}
    We introduce $\varepsilon$-projectors, using which we can sample from limiting distributions of continuous-time quantum walks. The standard algorithm for sampling from a distribution that is close to the limiting distribution of a given quantum walk is to run the quantum walk for a time chosen uniformly at random from a large interval, and measure the resulting quantum state. This approach usually results in an exponential running time. 

    We show that, using $\varepsilon$-projectors, we can sample exactly from the limiting distribution. In the black-box setting, where we only have query access to the adjacency matrix of the graph, our sampling algorithm runs in time proportional to $\Delta^{-1}$, where $\Delta$ is the minimum spacing between the distinct eigenvalues of the graph. In the non-black-box setting, we give examples of graphs for which our algorithm runs exponentially faster than the standard sampling algorithm. 
\end{abstract}

\vfill

\tableofcontents

\newpage

\section{Introduction}
\label{sec:intro}

Continuous-time quantum walks, first considered by Farhi and Gutmann \cite{farhi1998quantum}, are quantum analogues of continuous-time classical random walks. The dynamics of a continuous-time classical walk on an undirected graph $\Gamma$ is described by the differential equation
\begin{equation}
    \label{equ:clss-walk}
    \frac{d}{dt}p(t) = Lp(t)
\end{equation}
where $p(t)$ is the state of the walk at time $t$ and $L$ is the Laplacian of $\Gamma$. The entries of $p(t)$ are indexed by the set of vertices of $\Gamma$. In the quantum walk on $\Gamma$, equation \eqref{equ:clss-walk} is replaced by the Schr\"{o}dinger equation
\begin{equation}
    \label{equ:Schrodinger}
    i\frac{d}{dt}\ket{\psi_t} = H\ket{\psi_t}
\end{equation}
where the Hamiltonian $H = L$, and $\ket{\psi_t}$ is a quantum state whose amplitudes encode a probability distribution. Quantum walks have found many applications in quantum computing and quantum information. It was shown by Childs \cite{childs2009universal} that universal quantum computation can be implemented using quantum walks on low degree graphs. There are many algorithms based on quantum walk that achieve polynomial speedup over classical algorithms, e.g., \cite{childs2004spatial, farhi2007quantum, ambainis2010any, apers2021quadratic}. There are also black-box problems for which quantum walks achieve exponential speedup over classical algorithms \cite{childs2003exponential, childs2007quantum}.

A classical continuous-time random walk has a unique stationary distribution, assuming its underlying Markov chain is irreducible \cite{levin2017markov}. Regardless of the initial state, the walk converges to this stationary distribution as $t \rightarrow \infty$, and therefore, the stationary distribution is the same as the limiting distribution. This, however, does not hold for quantum walks, since quantum evolutions are unitary and preserve distance. Nevertheless, one can define a time-averaged probability distribution of a quantum walk by choosing a time $t \in [0, T]$ uniformly at random, running the walk for a total time $t$ and measuring the resulting state. When $T \rightarrow \infty$, this distribution converges to a limiting distribution which is what we will consider in this paper. This limiting distribution generally depends on the initial state of the walk.

If the graph $\Gamma$ is connected and simple, i.e., has no self loops and multiple edges, then the limiting distribution of the classical walk on $\Gamma$ is always the uniform distribution. In contrast, the limiting distribution of the quantum walk on $\Gamma$ depends on the initial state and is often not uniform. For example, the quantum walk on the hypercube \cite{moore2002quantum} or the Symmetric group \cite{gerhardt2003continuous}, or more generally $G$-circulant graphs \cite{adamczak2003note} does not converge to the uniform distribution. 

The mixing time $M_\delta$ of a quantum (or classical) walk is the minimum time after which the distribution of the walk is within distance $\delta$ of the limiting distribution. The mixing time of a classical walk depends inversely on the spectral gap of the transition matrix of the walk, while for a quantum walk, the mixing time depends inversely on the minimum gap between all pairs of distinct eigenvalues of the Hamiltonian $H$. The mixing time of quantum walks have been studied for specific graphs such as hypercubes, cycles and lattices \cite{moore2002quantum, fedichkin2005mixing, richter2007almost}, and for Erd\H{o}s-R\'{e}nyi random graphs \cite{chakraborty2020fast, chakraborty2020analog}. The bounds on the quantum mixing time for some graphs imply quadratic speedup over classical walks, while for some other graphs these bounds are larger than their classical counterparts.

The problem of sampling from the limiting distribution of a classical walk is an important problem and has been the focus of much research. The underlying randomness strategy in many algorithms of practical interest reduces to sampling from a limiting distribution. Much like in the classical case, sampling from the limiting distribution of a quantum walk is of both practical and theoretical interest. For example, Richter \cite{richter2007almost} proposed a ``double-loop'' quantum walk algorithm for sampling from a distribution that is close to the uniform distribution over a given graph $\Gamma$. The inner loop in Richter's algorithm samples from a distribution that is close to the limiting distribution $\Pi$ of the quantum walk. Therefore, an efficient algorithm for sampling from $\Pi$ results in an efficient algorithm for sampling uniformly from the vertex set of $\Gamma$. The black-box graph problem proposed by Childs et al.\;\cite{childs2003exponential}, which is interesting from a theoretical perspective, is based on sampling from the limiting distribution of a quantum walk. They prove that no classical algorithm can efficiently solve the proposed problem. 

Given the importance of sampling from the limiting distribution of a quantum walk, it is natural to ask whether there are efficient algorithms for sampling from such distributions, at least for specific graphs. There has not been much research explicitly addressing this question. The standard way of sampling from the limiting distribution of a quantum walk is by mixing: set a large value for $T$, run the walk for a uniformly random time $t \in [0, T]$, and measure. Chakraborty et al.\;\cite{chakraborty2020analog} considered sampling over the Erd\H{o}s-R\'{e}nyi graphs by mixing. They obtained an upper bound on the mixing time of these graphs through analyzing their spectrum. Their bound implies an exponential time sampling algorithm.

\subsection{Sampling without mixing}

In this work, we propose an algorithm for sampling from the limiting distribution of a given continuous-time quantum walk, that is not based on mixing. The idea behind our algorithm is to uniquely ``tag'' the eigenspaces of the adjacency matrix using polynomially long binary strings. More precisely, give a graph $\Gamma$ with $N$ vertices and adjacency matrix $A$, let $\ket{\phi_j}$ be an eigenstate of $A$ that belongs to an eigenspace $\X_j$ of $A$. Then, we will see that sampling from the limiting distribution $\Pi$ on $\Gamma$ reduces to performing the transform
\begin{equation}
    \label{equ:tag-op-intro}
    \ket{0}\ket{\phi_j} \mapsto \ket{t_j}\ket{\phi_j}
\end{equation}
where $t_j$ is a string of length $\poly(\log N)$ that uniquely identifies $\X_j$.  

To perform the transform \eqref{equ:tag-op-intro}, we introduce the general idea of $\varepsilon$-projectors. Informally, an $\varepsilon$-projector for the adjacency matrix $A$ is a set of hermitian matrices that have the same eigenspaces as $A$, and can be efficiently simulated as Hamiltonians. Moreover, the set of matrices in an $\varepsilon$-projector have to satisfy a separation condition with respect to their eigenvalues. A set of matrices satisfying such a separation condition is called an $\varepsilon$-separated set. A specific $\varepsilon$-projector was first implicitly used by Kane, Sharif and Silverberg \cite{kane2021quantum} for constructing a quantum money scheme based on quaternion algebras. Their $\varepsilon$-projector is a set of sparse Brandt matrices, which they use to verify an alleged bill.

\paragraph{Technique.}
Given an $\varepsilon$-projector $\mathcal{A}$ for $A$, we use phase estimation to store, in a separate register, an estimate of the eigenvalues of each operator in $\mathcal{A}$. More precisely, we perform the transform
\[ \ket{0}\ket{\phi_j} \mapsto \ket{\tilde{\lambda}_{1, j}} \ket{\tilde{\lambda}_{2, j}} \cdots \ket{\tilde{\lambda}_{r, j}} \ket{\phi_j} \]
where the $\tilde{\lambda}_{k, j}$ are approximate eigenvalues corresponding to the eigenstate $\ket{\phi_j}$. It follows from the $\varepsilon$-separatedness of $\mathcal{A}$ that the vectors $\bm{\lambda}_j = (\tilde{\lambda}_{1, j}, \dots, \tilde{\lambda}_{r, j})$ uniquely identify the eigenspaces of $A$. This means the binary representation of the $\bm{\lambda}_j$ can be used as the binary strings $t_j$ in \eqref{equ:tag-op-intro}. Therefore, efficient sampling from the limiting distribution $\Pi$ reduces to finding a good $\varepsilon$-projector for $A$. In general, a good $\varepsilon$-projector for $A$ is an $\varepsilon$-projector for which $\varepsilon^{-1}$ and $r$ are both at most $\poly(\log N)$. In specific cases where the operators in the $\varepsilon$-projector can be simulated efficiently even for large powers, it is not necessary for $\varepsilon^{-1}$ to be bounded by $\poly(\log N)$.

\paragraph{Black-box vs non-black-box.}
When the graph $\Gamma$ is given as a black-box, we can only see the structure of $\Gamma$ locally. In other words, we can only access the nonzero entries of each row of the adjacency matrix $A$ of $\Gamma$ through queries to an oracle. When we are restricted to query access to $A$, the global structure of $\Gamma$ is not known, and we do not have any other information on $A$ as an operator. In this case, we are essentially left with one choice for an $\varepsilon$-projector for $A$: $A$ itself. Consequently, we always have $\varepsilon \le \Delta$, where $\Delta$ is the minimum distance between any two distinct eigenvalues of $A$. The complexity of our sampling algorithm will then be bounded below by a multiple of $\Delta^{-1}$. 

In the non-black-box setting, we often have some knowledge of the global structure of $\Gamma$ that enables us to find nontrivial $\varepsilon$-projectors for $A$. In this paper, we give two examples of graphs for which we can find good $\varepsilon$-projectors, Winnie Li graphs and Supersingular Isogeny graphs.

Winnie Li graphs \cite{li1992character} are special cases of quasi-abelian graphs which are a subclass of Cayley graphs. We give a general strategy for finding $\varepsilon$-projectors for quasi-abelian graphs, and apply it to Winnie Li graphs. Without the use of an $\varepsilon$-projector, one could sample from the limiting distribution of quantum walks on Winnie Li graphs using two different methods. The first method is by mixing, which takes exponential time because the eigenvalues of the (normalized) adjacency matrix are very close. The second method is to use the quantum Fourier transform. For that, we need to be able to approximate the eigenvalues of $A$ efficiently. When the dimension of the underlying space is odd, these eigenvalues are multiples of some exponential sums called Kloosterman sums. There is no known efficient classical or quantum algorithm for approximating these sums. We will see that using a specific $\varepsilon$-projector, we can efficiently sample from the limiting distributions on these graphs.

A supersingular isogeny graphs is a regular graph in which the set of vertices is the set of supersingular elliptic curves and the edges are isogenies between these curves. Isogeny graphs have found many applications in cryptography \cite{de2017mathematics, de2018exploring}. The adjacency matrices of these graphs are called Hecke operators. The minimum distance between the eigenvalues of a Hecke operator is exponentially small, so the quantum mixing time for these graphs is exponentially large. It is known that the set of Hecke operators form a commutative algebra over $\C$. Using this fact, and assuming some standard heuristics, we will see that a small set of these operators form an $\varepsilon$-projector with high probability. Using this $\varepsilon$-projector, we can efficiently sample from the limiting distribution on these graphs. As an application, the sampling algorithm can be used to generate \textit{honest hard curves}. There is no known classical algorithm for efficiently generating such curves.


\section{Preliminaries}
\label{sec:prelim}

\subsection{Continuous-time quantum walk}
\label{sec:prlm-qwalk}

Let $\Gamma = (V, E)$ be an undirected graph with $N = \abs{V}$ vertices, and let $\X = \C^V$ be the complex euclidean space with basis $V$. We will refer to this basis as the vertex basis and denote its elements by $\ket{v}$, $v \in V$. Let $A$ be the adjacency matrix of $\Gamma$. The continuous-time quantum walk on $\Gamma$ is described by the differential equation \eqref{equ:Schrodinger} where the Hamiltonian is the Laplacian of $\Gamma$. Another common choice (which we also use in this paper) for the Hamiltonian of the walk is the adjacency matrix $A$. Then, the continuous-time quantum walk on $G$ at time $t$ is defined by the operator
\[ W(t) = e^{iAt} \]
on $\X$. For an initial quantum state $\ket{\psi_0}$ and a real number $T > 0$, define the following probability distribution on $V$: choose $t \in [0, T]$ uniformly at random, evolve the state $\ket{\psi_0}$ under $W(t)$, i.e., compute $W(t)\ket{\psi_0}$, and measure the resulting state in the vertex basis. The probability of measuring a vertex $v \in V$ is
\begin{equation}
    \label{equ:avg-dist}
    P_T(v \vert \psi_0) = \frac{1}{T} \int_0^T \abs{\braket{v}{W(t) \vert \psi_0}}^2 dt.
\end{equation}

Let $\{\ket{\phi_j}\}_{1 \le j \le N}$ be a set of eigenstates of $A$ that form an orthonormal basis for $\X$, and let $\{\lambda_j\}_{1 \le j \le N}$ be the set of corresponding eigenvalues. Let $\{ \X_j \}_{1 \le j \le M}$, where $M \le N$, be the set of eigenspaces of $A$, and define $I_j = \{k : \ket{\phi_k} \in \X_j \}$. Therefore, $I_j$ is the set of indices $k$ for which the eigenstates $\ket{\phi_k}$ correspond to the same eigenvalue. A straightforward calculation shows that
\begin{align}
    P_T(v \vert \psi_0) =
    & \sum_{j = 1}^M \abs[\Big]{\sum_{k \in I_j} \braket{v}{\phi_k} \braket{\phi_k}{\psi_0}}^2 + \nonumber \\
    & \sum_{k, \ell = 1 : \lambda_k  \ne \lambda_\ell}^N \braket{v}{\phi_k} \braket{\phi_k}{\psi_0} \braket{\psi_0}{\phi_\ell} \braket{\phi_\ell}{v} \frac{e^{i(\lambda_k - \lambda_\ell)T} - 1}{i(\lambda_k - \lambda_\ell)T} \label{equ:walk-dist}
\end{align}
Letting $T \rightarrow \infty$, the second term in the above expansion will vanish, and we get the distribution
\begin{equation}
    \label{equ:lim-dist}
    P_\infty(v \vert \psi_0) := \lim_{T \rightarrow \infty} P_T(v \vert \psi_0) = \sum_{j = 1}^M \abs[\Big]{\sum_{k \in I_j} \braket{v}{\phi_k} \braket{\phi_k}{\psi_0}}^2.
\end{equation}
This is called the \textit{limiting distribution} of the quantum walk $W(t)$. Given a real number $\delta \ge 0$, the mixing time $M_\delta$ of the walk $W(t)$, with respect to the initial state $\ket{\psi_0}$, is defined as
\begin{equation}
    \label{equ:mixing-time}
    M_\delta = \min \{T' : \opnorm{P_T(\cdot \vert \psi_0) - P_\infty(\cdot \vert \psi_0)}_1 \le \delta, \forall T \ge T' \},
\end{equation}
where $P_T(\cdot \vert \psi_0)$ and $P_\infty(\cdot \vert \psi_0)$ are the probability vectors defined by \eqref{equ:avg-dist} and \eqref{equ:lim-dist}, respectively. Denote by $\Delta$ the minimum distance between all pairs of distinct eigenvalues of $A$, i.e.,
\begin{equation}
    \label{equ:eigen-dist}
    \Delta = \min_{\lambda_k \ne \lambda_\ell} \abs{\lambda_k - \lambda_\ell}, 1 \le k, \ell \le N.
\end{equation}
Using the same analysis as in \cite{aharonov2001quantum}, it can be shown that
\begin{equation}
    \label{equ:mixing-distnc}
    \opnorm{P_T(\cdot \vert \psi_0) - P_\infty(\cdot \vert \psi_0)}_1 \le \frac{2 \ln M + 2}{T\Delta}.
\end{equation}
A proof of this bound is given in Appendix \ref{sec:mixing:proof} for completeness. From the definition of $M_\delta$ and the bound \eqref{equ:mixing-distnc}, we see that 
\begin{equation}
    \label{equ:mixing-bound}
    M_\delta \le \frac{2 \ln M + 2}{\delta \Delta}.
\end{equation}

\subsection{Representation theory}

For an introduction to representation theory see \cite{curtis1966representation, serre1977linear}. Let $V$ be a $\C$-vector space of finite dimension, and let $\GL(V)$ be the group of automorphisms of $V$. Let $G$ be a finite group. A linear representation of $G$ in $V$ is a homomorphism of groups $\rho: G \rightarrow \GL(V)$. The degree of $\rho$, denoted by $d_\rho$, is the dimention of $V$ as a $\C$-vector space. The character of $\rho$ is the function $\chi_\rho: G \rightarrow \C$ defined by $\chi_\rho(a) = \tr \rho(a)$.  A morphism of representations $\rho_1: G \rightarrow \GL(V_1)$ and $\rho_2: G \rightarrow \GL(V_2)$ is a $\C$-linear map $\phi: V_1 \rightarrow V_2$ such that for every $a \in G$ the diagram
\begin{center}
    \begin{tikzcd}[row sep = large, column sep = large]
        V_1 \arrow[r, "\rho_1(a)"] \arrow[d, "\phi"] & V_2 \arrow[d, "\phi"] \\
        V_2 \arrow[r, "\rho_2(a)"] & V_2
    \end{tikzcd}
\end{center}
commutes. The representations $\rho_1$ and $\rho_2$ are said to be isomorphic if $\phi$ is an isomorphism. A subrepresentation of $\rho$ is a representation $\rho^W: G \rightarrow \GL(W)$ where $W \subseteq V$ is linear subspace. A representation that has no subrepresentations except for $W = 0, V$ is called an irreducible representation. We denote by $\widehat{G}$ the set of isomorphism classes of irreducible representations of $G$. Any representation $\rho$ of $G$ can be decomposed as a direct sum of irreducible representations: If $\varrho_1, \dots, \varrho_k$ is a complete set of irreducible representations of $G$ then $\rho = n_1 \varrho_1 \oplus \cdots \oplus n_k \varrho_k$ for some integers $n_j \ge 0$. Here, $n_j\varrho_j$ means a direct sum of $n_j$ copies of $\varrho_j$.

Let $V = \C^G$, i.e., the space of functions $f: G \rightarrow \C$. The (left) regular representation of $G$ is a representation $\rho_\mathrm{reg}$ in $V$ defined by $\rho_\mathrm{reg}(s) f(a) = f(s^{-1}a)$ for any function $f \in V$. The regular representation decomposes as
\begin{equation}
    \label{equ:reg-rep}
    \rho_\mathrm{reg} \cong \bigoplus_{\varrho \in \widehat{G}} d_\varrho\varrho,
\end{equation}
which also shows that $\sum_{\varrho} d_\varrho^2 = \abs{G}$.

A representation $\rho$ of $G$ is called unitary if $\rho(a)$ is a unitary matrix for all $a \in G$. Given any representation $\rho$ of $G$, there always exists an inner product on $V$ with respect to which $\rho$ is unitary. Therefore, in this paper, we assume that all representations are unitary. In particular, any unitary representation can be decomposed as a sum of unitary irreducible representations.

The Fourier transform of a function $f: G \rightarrow \C$ at a representation $\varrho \in \widehat{G}$ is defined by
\[ \widehat{f}(\varrho) = \sqrt{\frac{d_\varrho}{\abs{G}}} \sum_{a \in G} \varrho(a) f(a). \]
The Fourier transform of $f$ is given by $\oplus_\varrho \widehat{f}(\varrho)$. The quantum Fourier transform of a state $\ket{\psi} = \sum_{x \in G} \alpha_x \ket{x}$ is given by
\begin{equation}
    \label{equ:qft-g}
    \ket{\widehat{\psi}} = \sum_{\varrho \in \widehat{G}} \; \sum_{1 \le j, k \le d_\varrho} \widehat{\alpha}(\varrho)_{j, k} \ket{\varrho, j, k}
\end{equation}
where $\alpha: G \rightarrow \C$ is defined by $\alpha(x) = \alpha_x$, and $\widehat{\alpha}(\varrho)_{j, k}$ is the $(i, j)$ entry of the matrix $\widehat{\alpha}(\varrho)$.


\section{Sampling Using \texorpdfstring{$\bm{\varepsilon}$}{epsilon}-Projectors}
\label{sec:sampling}

Let $\Gamma = (V, E)$ be a graph with $N$ vertices and let $A$ be the adjacency matrix of $\Gamma$. Assume the same notation as in Section \ref{sec:prlm-qwalk}. A closer look at the sum in \eqref{equ:lim-dist} suggests the following simple approach to sampling from the limiting distribution $P_\infty$. Since $\{\ket{\phi_j}\}$ is an orthonormal basis for $\X$, given any initial state $\ket{\psi_0}$, we can always write
\[ \ket{\psi_0} = \sum_{j = 1}^N \braket{\phi_j}{\psi_0} \ket{\phi_j}. \]
Suppose we have a quantum algorithm $Q$ that can uniquely ``tag'' the eigenspaces of $A$ in the above superposition, using an extra register. More precisely, $Q$ performs the following operation
\begin{equation}
    \label{equ:tag-op}
    \sum_{j = 1}^N \braket{\phi_j}{\psi_0} \ket{0} \ket{\phi_j} \longmapsto \sum_{j = 1}^N \braket{\phi_j}{\psi_0} \ket{t_j} \ket{\phi_j} = \sum_{j = 1}^M \sum_{k \in I_j} \braket{\phi_k}{\psi_0} \ket{t_j} \ket{\phi_k}
\end{equation}
where the strings $t_j$ are unique with respect to the eigenspaces of $A$. If we measure the second register in the vertex basis, we obtain a vertex $v \in V$ with the probability given by \eqref{equ:lim-dist}.

A naive choice for the tags $t_j$ are the eigenvalues of $A$. These eigenvalues can be approximated using phase estimation on the walk operator $W(t)$. However, to be able to uniquely identify the eigenspaces of $A$, one might need to compute the eigenvalues with exponential accuracy. Any such computation generally takes exponential time unless $W(t)$ can be applied efficiently for exponentially large $t$. In particular, if we treat $A$ as a black-box, the complexity of performing \eqref{equ:tag-op} is going to be exponential in $t$. Therefore, any successful attempt at efficiently performing \eqref{equ:tag-op} will require some extra information or assumptions on $A$.

In the following we present the main idea of the paper, an algorithm for performing \eqref{equ:tag-op} that uses a specific set of operators that commute with $A$. We call such a set of operators an $\varepsilon$-\textit{projector}. For many classes of graphs, we can find $\varepsilon$-projectors that enable us to efficiently perform \eqref{equ:tag-op}. We need to adapt the definition of an $\varepsilon$-separated set from \cite{kane2021quantum} to a set of operators.

\begin{definition}
    For an integer $r > 0$, let $\mathcal{A} = \{A_j\}_{1 \le j \le r}$ be a set of hermitian operators, acting on $\X$, that have the same eigenspaces. For an eigenstate $\ket{\phi_j}$, let $\lambda_{1, j}, \lambda_{2, j}, \dots, \lambda_{r, j}$ be the eigenvalues of the operators $A_1, A_2, \dots, A_r$ associated with $\ket{\phi_j}$, respectively. Define the vector $\bm{\lambda}_j = (\lambda_{1, j}, \lambda_{2, j}, \dots, \lambda_{r, j})$ for each $j = 1, \dots, N$. For a real number $\varepsilon > 0$, the set of operators $\mathcal{A}$ is said to be $\varepsilon$-separated if
    \[ \opnorm{\bm{\lambda}_j - \bm{\lambda}_k}_2 \ge \varepsilon, \text{ for all } \bm{\lambda}_j \ne \bm{\lambda}_k, 1 \le j, k \le N. \]    
\end{definition}

\begin{definition}
    \label{def:eps-proj}
    Let $A$ be a hermitian operator on $\X$. An $\varepsilon$-projector for $A$ is an $\varepsilon$-separated set $\mathcal{A} = \{A_j\}_{1 \le j \le r}$ such that for all $j = 1, \dots, r$
    \begin{itemize}[itemsep = 0mm, parsep = 1mm, topsep = 1mm]
    \item the walk $e^{iA_jt}$ can be performed in $F(t) \poly(\log N)$ operations, where $F(t) \in O(t)$, and
    \item $A_j$ has the same eignenspaces as $A$.
    \end{itemize}
\end{definition}

The function $F(t)$ in Definition \ref{def:eps-proj} determines how efficient the walks $e^{iA_jt}$ can be performed for different values of $t$. For an $\varepsilon$-projector we require that $F(t)$ be bounded above by a linear function in $t$. Also, here operations refer to elementary quantum gate operations. We now give an algorithm for sampling from the limiting distribution of the quantum walk on $\Gamma$. The algorithm takes as input an $\varepsilon$-projector for the adjacency matrix $A$.

\begin{algo-thm}[Sampling]
    \label{alg:sampling} \
    \begin{description}[font = \normalfont\itshape, topsep = 0pt, itemsep = 0pt, parsep = 0pt]
    \item[Input:] An adjacency matrix $A$ of a graph $\Gamma = (V, E)$, an $\varepsilon$-projector $\mathcal{A} = \{A_j\}_{1 \le j \le r}$ for $A$, an initial state $\ket{\psi_0} \in \X$.
    \item[Output:] A sample from the limiting distribution of the walk $W(t) = e^{iAt}$ on $\Gamma$.
    \end{description}
    \begin{enumerate}[leftmargin = *, topsep = 0pt, itemsep = 0pt, parsep = 0pt]
    \item  Perform phase estimation on the unitaries $e^{iA_1}, \dots, e^{iA_r}$ and the input state $\ket{\psi_0}$ with accuracy $\varepsilon / 2\sqrt{r}$, and store the approximate phases in extra registers. Denote by $\tilde{\lambda}_{k, j}$ the approximation of the eigenvalue $\lambda_{k, j}$ of $A_k$ corresponding to the eigenstate $\ket{\phi_j}$. Then the resulting state of this step is
        \begin{equation}
            \label{equ:many-phase}
            \sum_{j = 1}^N \braket{\phi_j}{\psi_0} \ket{\tilde{\lambda}_{1, j}} \ket{\tilde{\lambda}_{2, j}} \cdots \ket{\tilde{\lambda}_{r, j}} \ket{\phi_j}.
        \end{equation}
        where $\abs{\lambda_{i, j} - \tilde{\lambda}_{i, j}} < \varepsilon / 2\sqrt{r}$ for all $i = 1, \dots, r$. If we group the content of the first $r$ registers as a vector $\tilde{\bm{\lambda}}_j$ then the state \eqref{equ:many-phase} can be written as
        \begin{equation}
            \label{equ:many-phase-g}
            \sum_{j = 1}^N \braket{\phi_j}{\psi_0} \ket{\tilde{\bm{\lambda}}_j} \ket{\phi_j}
        \end{equation}
    \item Measure the last register in the vertex basis.
    \item Return the measured vertex. 
    \end{enumerate}
\end{algo-thm}

Algorithm \ref{alg:sampling} proposes to use the (binary representations) of the phase vectors $\tilde{\bm{\lambda}}_j$ as the tags $t_j$ in \eqref{equ:tag-op}. The correctness of the algorithm follows from the next lemma.

\begin{lemma}
    The vectors $\tilde{\bm{\lambda}}_j$ uniquely determine the eigenspaces of $A$. More precisely, $\tilde{\bm{\lambda}}_j = \tilde{\bm{\lambda}}_k$ if and only if $j$ and $k$ both belong to $I_\ell$ for some $\ell$.
\end{lemma}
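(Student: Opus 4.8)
The plan is to show both directions of the biconditional, with the main work being the ``only if'' direction. The ``if'' direction is essentially immediate: if $j, k \in I_\ell$, then $\ket{\phi_j}$ and $\ket{\phi_k}$ lie in the same eigenspace $\X_\ell$ of $A$. Since each operator $A_m \in \mathcal{A}$ has the same eigenspaces as $A$ (by Definition~\ref{def:eps-proj}), the whole space $\X_\ell$ is an eigenspace of $A_m$, so $\ket{\phi_j}$ and $\ket{\phi_k}$ share the same eigenvalue $\lambda_{m, j} = \lambda_{m, k}$ under $A_m$ for every $m = 1, \dots, r$. Hence $\bm{\lambda}_j = \bm{\lambda}_k$ as exact vectors. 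The phase estimation in Step~1 is performed on the same input eigenstate data, so the approximations produced are determined by the exact eigenvalues; I would note that with standard phase estimation one can arrange (e.g.\ by rounding to a fixed grid, or simply because the exact eigenvalues coincide) that identical eigenvalues yield identical output registers, giving $\tilde{\bm{\lambda}}_j = \tilde{\bm{\lambda}}_k$.

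For the ``only if'' direction, suppose $\tilde{\bm{\lambda}}_j = \tilde{\bm{\lambda}}_k$ but $\bm{\lambda}_j \ne \bm{\lambda}_k$. The accuracy guarantee from Step~1 gives $\abs{\lambda_{m, j} - \tilde{\lambda}_{m, j}} < \varepsilon / 2\sqrt{r}$ and $\abs{\lambda_{m, k} - \tilde{\lambda}_{m, k}} < \varepsilon / 2\sqrt{r}$ for each $m$. Since the tilded vectors are equal componentwise, the triangle inequality yields $\abs{\lambda_{m, j} - \lambda_{m, k}} < \varepsilon / \sqrt{r}$ for every $m$, and therefore
\[ \opnorm{\bm{\lambda}_j - \bm{\lambda}_k}_2 = \Bigl( \sum_{m=1}^r \abs{\lambda_{m,j} - \lambda_{m,k}}^2 \Bigr)^{1/2} < \sqrt{r} \cdot \frac{\varepsilon}{\sqrt{r}} = \varepsilon. \]
This contradicts the $\varepsilon$-separatedness of $\mathcal{A}$, which forces $\opnorm{\bm{\lambda}_j - \bm{\lambda}_k}_2 \ge \varepsilon$ whenever $\bm{\lambda}_j \ne \bm{\lambda}_k$. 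Hence $\bm{\lambda}_j = \bm{\lambda}_k$, i.e., $\ket{\phi_j}$ and $\ket{\phi_k}$ have the same eigenvalue under every $A_m$.

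It remains to translate $\bm{\lambda}_j = \bm{\lambda}_k$ into ``$j$ and $k$ lie in a common $I_\ell$.'' Here I would invoke that the common eigenspaces of $\mathcal{A}$ refine, and in fact coincide with, the eigenspaces of $A$: since each $A_m$ has exactly the eigenspaces $\X_1, \dots, \X_M$ of $A$, the joint eigenspace of the tuple $\mathcal{A}$ determined by a value $\bm{\lambda}$ is an intersection of eigenspaces of the $A_m$, each of which is a union of the $\X_\ell$; one checks this intersection is precisely a single $\X_\ell$ because the $\X_\ell$ already diagonalize every $A_m$ simultaneously. Thus $\bm{\lambda}_j = \bm{\lambda}_k$ implies $\ket{\phi_j}, \ket{\phi_k} \in \X_\ell$ for the same $\ell$, i.e., $j, k \in I_\ell$. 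The only real subtlety — and the step I would treat most carefully — is the rounding convention in phase estimation needed to make the ``if'' direction give an exact equality of the output registers rather than merely closeness; the ``only if'' direction is a clean $\varepsilon/2\sqrt r$-to-$\varepsilon$ triangle-inequality argument that is the whole point of the $\varepsilon/2\sqrt r$ choice of accuracy.
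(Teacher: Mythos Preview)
Your proposal is correct and follows essentially the same route as the paper: use the per-coordinate accuracy $\varepsilon/2\sqrt{r}$ to bound $\opnorm{\bm{\lambda}_j - \tilde{\bm{\lambda}}_j}_2 < \varepsilon/2$, then apply the triangle inequality to get $\opnorm{\bm{\lambda}_j - \bm{\lambda}_k}_2 < \varepsilon$ whenever $\tilde{\bm{\lambda}}_j = \tilde{\bm{\lambda}}_k$, contradicting $\varepsilon$-separatedness. You are in fact a bit more careful than the paper, which only spells out the ``only if'' direction and leaves both the ``if'' direction and the phase-estimation rounding subtlety you flag implicit.
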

\begin{proof}
    Let $\bm{\lambda}_j = (\lambda_{1, j}, \lambda_{2, j}, \dots, \lambda_{r, j})$ be the vectors of the exact eigenvalues of $A_1, \dots, A_r$ corresponding to the eigenstate $\ket{\phi_j}$. Then for all $j = 1, \dots, N$ we have
    \begin{equation}
        \label{equ:tag-ineq}
        \opnorm{\bm{\lambda}_j - \tilde{\bm{\lambda}}_j}_2^2 = \sum_{i = 1}^r \abs{\lambda_{i, j} - \tilde{\lambda}_{i, j}}^2 < \frac{\varepsilon^2}{4} 
    \end{equation}
    where the last inequality follows from the bound $\abs{\lambda_{i, j} - \tilde{\lambda}_{i, j}} < \varepsilon / 2\sqrt{r}$. Now, suppose $\tilde{\bm{\lambda}}_j = \tilde{\bm{\lambda}}_k$ where $k \in I_h$ and $j \in I_\ell$ and $h \ne \ell$. Then
    \begin{align*}
        \opnorm{\bm{\lambda}_j - \bm{\lambda}_k}_2
        & = \opnorm{(\bm{\lambda}_j - \tilde{\bm{\lambda}}_j) - (\bm{\lambda}_k - \tilde{\bm{\lambda}}_k)}_2 \\
        & \le \opnorm{\bm{\lambda}_j - \tilde{\bm{\lambda}}_j}_2 + \opnorm{\bm{\lambda}_k - \tilde{\bm{\lambda}}_k}_2 \\
        & < \varepsilon  \tag*{by \eqref{equ:tag-ineq}}
    \end{align*}
    which contradicts the $\varepsilon$-separatedness of $\mathcal{A}$.
\end{proof}

The following theorem records the main result of this section.

\begin{theorem}
    \label{thm:main}
    Let $\Gamma = (V, E)$ be a graph with $N$ vertices and adjacency matrix $A$. Given an $\varepsilon$-projector $\mathcal{A} = \{A_j\}_{1 \le j \le r}$ for $A$ such that the walks $e^{iA_jt}$ can be performed in $F(t) \poly(\log N)$ operations, there is a quantum algorithm that can sample from the limiting distribution of the walk $W(t) = e^{iAt}$ in $O(r F(2\sqrt{r}\varepsilon^{-1}) \poly(\log N))$ operations.
\end{theorem}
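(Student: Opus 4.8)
The plan is to show that Algorithm \ref{alg:sampling} is correct and to bound its running time, with essentially all of the work being the cost analysis since correctness is already handed to us by the preceding lemma. First I would argue correctness: by the lemma, the phase vectors $\tilde{\bm{\lambda}}_j$ are constant on each eigenspace index set $I_\ell$ and distinct across different eigenspaces, so after Step 1 the state \eqref{equ:many-phase-g} has exactly the form \eqref{equ:tag-op} with $t_j = \tilde{\bm{\lambda}}_{j}$ serving as the unique eigenspace tag. Consequently, regrouping the sum over $j$ into a double sum over eigenspaces $\X_\ell$ and indices $k \in I_\ell$, and then measuring the last register in the vertex basis, yields a vertex $v$ with probability $\sum_\ell \bigl|\sum_{k \in I_\ell}\braket{v}{\phi_k}\braket{\phi_k}{\psi_0}\bigr|^2$, which is precisely $P_\infty(v \mid \psi_0)$ from \eqref{equ:lim-dist}. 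I should be a little careful here about the fact that phase estimation does not produce a clean product state \eqref{equ:many-phase-g} but rather a superposition over nearby estimates; the standard fix is to note that with the chosen accuracy the bulk of the amplitude on each $\ket{\phi_j}$ concentrates on estimates within $\varepsilon/2\sqrt r$ of the true eigenvalue, and for the purposes of the statement one either invokes the exact-phase-estimation variant or absorbs the small error — I would phrase the theorem's correctness in the idealized form consistent with how the algorithm is stated, mirroring the paper's own treatment.

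Next I would do the complexity accounting, which is the heart of the proof. Step 1 runs phase estimation $r$ times, once for each unitary $e^{iA_k}$, each to accuracy $\varepsilon/2\sqrt r$. Standard phase estimation to accuracy $\eta$ uses $O(\eta^{-1})$ controlled applications of the unitary (together with $O(\mathrm{poly}(\log \eta^{-1}))$ additional gates for the inverse QFT on the ancilla register, which is subsumed in $\poly(\log N)$). Here $\eta^{-1} = 2\sqrt r\,\varepsilon^{-1}$, so each phase estimation needs $O(2\sqrt r\,\varepsilon^{-1})$ controlled-$e^{iA_k}$ operations. A controlled application of $e^{iA_k t}$ for $t$ up to $O(2\sqrt r\,\varepsilon^{-1})$ is, by the defining property of an $\varepsilon$-projector in Definition \ref{def:eps-proj}, implementable in $F(t)\poly(\log N)$ gates with $F$ linear — so the total number of elementary gates consumed by one phase estimation is $O\bigl(F(2\sqrt r\,\varepsilon^{-1})\poly(\log N)\bigr)$, using that $F(t)\in O(t)$ lets us replace $\eta^{-1}$ copies of $e^{iA_k}$ by a single $e^{iA_k s}$ with $s=O(\eta^{-1})$ (or, if one insists on repeated single-step applications, $\eta^{-1}\cdot F(1)\poly(\log N)$, which is the same order). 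Summing over the $r$ operators multiplies this by $r$, giving $O\bigl(r\,F(2\sqrt r\,\varepsilon^{-1})\poly(\log N)\bigr)$ for Step 1. Step 2 is a single measurement on $O(\log N)$ qubits and Step 3 is free, so they are dominated. Adding up, the total is $O\bigl(r\,F(2\sqrt r\,\varepsilon^{-1})\poly(\log N)\bigr)$, as claimed.

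The main obstacle I anticipate is being honest about the interface between phase estimation's output and the idealized tagging transform \eqref{equ:tag-op}: genuine phase estimation spreads amplitude over a neighborhood of estimates, so the post-measurement distribution is only $\varepsilon'$-close to $P_\infty$ rather than exactly equal unless one uses a consistent (but more expensive, or probabilistic) variant. Since the theorem as stated promises a sample from the limiting distribution, I would either (i) appeal to the standard trick of making the estimate deterministic by choosing the precision so the true eigenvalues lie exactly on the estimation grid — which is justified precisely when the $A_j$ have eigenvalues of bounded denominator, a situation that does arise in the later applications — or (ii) note that the statement should be read as "samples from a distribution $\varepsilon$-close to $P_\infty$," which is how the paper's introduction frames the guarantee, and then the error incurred is controlled by the $\varepsilon$-separation exactly as in the lemma. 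A secondary, more routine concern is verifying that the $\poly(\log N)$ overhead of the inverse QFT and the ancilla bookkeeping really does stay $\poly(\log N)$ when the accuracy parameter $\varepsilon^{-1}$ is itself as large as $\poly(\log N)$ (it does, since it only contributes $\poly(\log(\sqrt r/\varepsilon))$), and that controlled versions of the $e^{iA_j t}$ are available at the same cost as the uncontrolled ones (a standard and cheap modification of any Hamiltonian-simulation routine).
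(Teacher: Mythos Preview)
Your proposal is correct and follows essentially the same approach as the paper: correctness is deferred to the preceding lemma on uniqueness of the tags $\tilde{\bm{\lambda}}_j$, and the running time comes from counting $r$ phase estimations each to accuracy $\varepsilon/2\sqrt{r}$, costing $O(F(2\sqrt{r}\varepsilon^{-1})\poly(\log N))$ apiece. The paper's proof is in fact terser than yours---it simply cites a standard reference for the phase-estimation cost and omits any discussion of the idealization you flag regarding the spread of phase-estimation outcomes.
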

\begin{proof}
    The time consuming part of the algorithm is the phase estimation for the operators $W_j = e^{iA_j}$ for $j = 1, \dots, r$. Each of these phase estimations is done with accuracy $\varepsilon / 2\sqrt{r}$, and therefore requires $O(F(2\sqrt{r}\varepsilon^{-1}) \poly(\log N))$ operations \cite[Chapter 7]{kaye2006introduction}. Since there are $r$ phase estimations, the claimed running time follows.
\end{proof}

From Theorem \ref{thm:main} we see that the complexity of Algorithm \ref{alg:sampling} is mostly determined by the ``quality'' of the given $\varepsilon$-projector $\mathcal{A}$. If the number $r$ of the operators in $\mathcal{A}$ and the separation parameter $\varepsilon$ are $\poly(\log N)$ and $1 / \poly(\log N)$, respectively, then the algorithm is efficient, i.e., runs in $\poly(\log N)$ operations. Otherwise, there is a natural trade-off between the sizes of the two parameters. Also note that, by definition, we always have $F(t) \in O(t)$ for any $\varepsilon$-projector, so the running time in Theorem \ref{thm:main} is always upper bounded by $O(r^{3 / 2} \varepsilon^{-1} \poly(\log N))$. 

An immediate special case of Theorem \ref{thm:main} is when the $\varepsilon$-projector is a singleton set $\mathcal{A} = \{B\}$, that is, when $r = 1$. If we only have black-box access to $B$ we can only perform the walk $e^{iBt}$ with a running time that scales linearly in $t$. In this case, the running time of Algorithm \ref{alg:sampling} scales linearly in $\varepsilon^{-1}$. On the other hand, if we can perform the walk $e^{iBt}$ with a running time that scales polynomially in $\log t$, then the running time of Algorithm \ref{alg:sampling} scales polynomially in $\log(\varepsilon^{-1})$. A lower bound for the running time of Algorithm \ref{alg:sampling} can be obtained using the fact that for any such  $\varepsilon$-projector $\mathcal{A}$ we must have $\varepsilon \le \Delta$, where $\Delta$, defined in \eqref{equ:eigen-dist}, is the minimum spacing between the distinct eigenvalues of $A$. Let us record these observations for the sake of referencing.

\begin{corollary}
    \label{cor:r-eq-1}
    Let $\Gamma = (V, E)$ be a graph with $N$ vertices and adjacency matrix $A$. Given an $\varepsilon$-projector $\mathcal{A} = \{B\}$ for $A$, we have the following:
    \begin{enumerate}[label = (\alph*), leftmargin = *, itemsep = 0mm, topsep = 2mm]
    \item \label{cor:r-eq-1-exp} If the unitary $e^{iB}$ can be applied in $\poly(\log N)$ operations, then there is a quantum algorithm that can sample from the limiting distribution of the walk $W(t) = e^{iAt}$ in $O(\varepsilon^{-1} \poly(\log N))$ operations.
    \item \label{cor:r-eq-1-poly} If the unitary $e^{iBt}$ can be applied in $O((\log t) \poly(\log N))$ operations, then there is a quantum algorithm that can sample from the limiting distribution of the walk $W(t) = e^{iAt}$ in $O(\log(\varepsilon^{-1}) \poly(\log N))$ operations.
    \end{enumerate}
    Moreover, if $\Delta$ is known or can be approximated efficiently, then the above running times can be improved to $O(\Delta^{-1} \poly(\log N))$ in case \ref{cor:r-eq-1-exp} and $O(\log(\Delta^{-1}) \poly(\log N))$ in case \ref{cor:r-eq-1-poly}.
\end{corollary}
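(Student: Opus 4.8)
I would treat this as a specialization of Theorem~\ref{thm:main} to a singleton $\varepsilon$-projector ($r=1$): the plan is to identify the function $F$ in each of the two regimes, substitute into the bound $O\bigl(rF(2\sqrt r\,\varepsilon^{-1})\poly(\log N)\bigr)$, and then observe that knowing $\Delta$ lets one discard $B$ in favour of the optimal singleton projector $\{A\}$.

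For \ref{cor:r-eq-1-exp}: if $e^{iB}$ can be applied in $\poly(\log N)$ operations, then for every integer $m \ge 1$ the (controlled) unitary $e^{iBm}$ can be applied in $m\poly(\log N)$ operations by iterating (controlled-)$e^{iB}$, so $\{B\}$ is an $\varepsilon$-projector with $F(t) = t \in O(t)$, and Theorem~\ref{thm:main} with $r=1$ gives running time $O\bigl(F(2\varepsilon^{-1})\poly(\log N)\bigr) = O(\varepsilon^{-1}\poly(\log N))$. For \ref{cor:r-eq-1-poly}: if $e^{iBt}$ (and its controlled version) can be applied in $O((\log t)\poly(\log N))$ operations, then $\{B\}$ is an $\varepsilon$-projector with $F(t) = O(\log t) \in O(t)$, and Theorem~\ref{thm:main} with $r=1$ gives $O\bigl(F(2\varepsilon^{-1})\poly(\log N)\bigr) = O(\log(\varepsilon^{-1})\poly(\log N))$. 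The one routine point to verify is the phase-estimation cost: phase estimation of a single unitary to accuracy $\varepsilon/2$ uses the controlled powers $e^{iB\cdot 2^k}$ for $k$ over $O(\log(\varepsilon^{-1}))$ values, whose total cost $\sum_k F(2^k)\poly(\log N)$ is $O(\varepsilon^{-1}\poly(\log N))$ when $F(t)=t$ and $O(\log(\varepsilon^{-1})\poly(\log N))$ when $F(t)=O(\log t)$ in the regime of interest $\log(\varepsilon^{-1}) = \poly(\log N)$ — this is exactly the estimate already invoked in the proof of Theorem~\ref{thm:main} via \cite[Chapter~7]{kaye2006introduction}.

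For the ``moreover'' clause I would not use the given $B$ at all: take $\mathcal{A} = \{A\}$. Its one-dimensional eigenvalue tuples $(\lambda_j)$ are pairwise at distance at least $\Delta$ by the definition \eqref{equ:eigen-dist} of $\Delta$, so $\{A\}$ is $\Delta$-separated and visibly has the same eigenspaces as $A$. Under the hypothesis of \ref{cor:r-eq-1-exp} — in particular in the black-box model, where $e^{iAt}$ is realized by sparse Hamiltonian simulation in $O(t\poly(\log N))$ operations — $\{A\}$ is a $\Delta$-projector with $F(t)=t$ (respectively $F(t)=O(\log t)$ under the hypothesis of \ref{cor:r-eq-1-poly}); running Algorithm~\ref{alg:sampling} on $\{A\}$ with $\varepsilon := \Delta$ and applying parts \ref{cor:r-eq-1-exp}, \ref{cor:r-eq-1-poly} yields the improved bounds $O(\Delta^{-1}\poly(\log N))$ and $O(\log(\Delta^{-1})\poly(\log N))$. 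If $\Delta$ is known only up to a constant factor, running the algorithm with any lower bound $\tilde\Delta$ satisfying $\Delta/c \le \tilde\Delta \le \Delta$ in place of $\varepsilon$ is still correct, since $\{A\}$ is \emph{a fortiori} $\tilde\Delta$-separated, and only changes the running time by a constant factor.

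I do not expect a genuine obstacle: everything reduces to Theorem~\ref{thm:main} together with the remark that $\{A\}$ realizes the largest admissible separation $\varepsilon = \Delta$ for a singleton projector. The only delicate point is the implicit requirement in the ``moreover'' clause that $e^{iAt}$ itself — not merely $e^{iBt}$ — admit the efficient simulation of case \ref{cor:r-eq-1-exp} or \ref{cor:r-eq-1-poly}; this holds automatically in the black-box setting that motivates the corollary, and should otherwise be stated as a standing assumption.
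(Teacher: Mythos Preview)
Your proposal is correct and matches the paper's approach: the paper does not give a formal proof of the corollary at all, merely the paragraph preceding it, which says exactly that this is the $r=1$ specialization of Theorem~\ref{thm:main} with $F(t)$ linear (case \ref{cor:r-eq-1-exp}) or logarithmic (case \ref{cor:r-eq-1-poly}), and the paragraph following it confirms that the intended reading of the ``moreover'' clause is to take $\mathcal{A}=\{A\}$ with $\varepsilon=\Delta$. Your observation that the ``moreover'' clause tacitly assumes $e^{iAt}$, not just $e^{iBt}$, admits the relevant simulation bound is well taken and is indeed left implicit in the paper.
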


In the black-box setting, we are usually given access to the adjacency matrix $A$ of $\Gamma$ such that we can apply the unitary $e^{iA}$ in time $\poly(\log N)$. In this setting, we can just take the $\varepsilon$-projector $\mathcal{A} = \{A\}$ for a small enough $\varepsilon$. This makes the complexity of sampling from the limiting distribution of the walk $W(t)$ fundamentally dependent on $\Delta$. If we use the naive way of running $W(t)$ for a large $t$ and measuring the resulting state, the bound \eqref{equ:mixing-bound} suggests that we should take $T$ proportional to $(2 \ln M + 2) / \delta\Delta$ to be within distance $\delta$ of the limiting distribution. In comparison, Corollary \ref{cor:r-eq-1} says we only need to run $W(t)$ for $t \approx 1 / \Delta$ (and perform some other negligible operations) to sample exactly from the limiting distribution. 

In the non-black-box setting, we have the opportunity to exploit some extra information on $A$ to find $\varepsilon$-projectors that allow us to efficiently sample from the limiting distribution of $W(t)$. In the following sections, we give examples of graphs for which we can find such $\varepsilon$-projectors.


\section{Quasi-Abelian Graphs}
\label{sec:quasi-abelian}

As a first application of our sampling algorithm we consider a class of graphs, called quasi-abelian graphs, in this section. Quasi-abelian graphs, as we will see, are a potential source of concrete examples for which Algorithm \ref{alg:sampling} runs in polynomial time. In the following, we first review some general properties of quasi-abelian graphs, and then look more closely at a specific example called Winnie Li graphs. 

Let $G$ be a finite group of size $N$, and let the subset $S \subseteq G$ be such that $1 \notin S$. The Cayley digraph $\Gamma = \Gamma(G, S)$ of the pair $(G, S)$ is a directed graph in which the vertex set is the set of elements in $G$ and the edge set is $\{(a, as) : a \in G, s \in S \}$. If $S$ is symmetric, i.e., $s \in S$ if and only if $s^{-1} \in S$, then $\Gamma$ is an undirected graph called the Cayley graph. In this paper, we always assume that $S$ generates the entire group $G$, which means that $\Gamma$ is connected. Denote by $f_S: G \rightarrow \{0, 1\}$ the characteristic function of $S$ defined by $f_S(a) = 1$ if $a \in S$ and $f_S(a) = 0$ if $a \notin S$. We will also denote by $A(\Gamma)$ the adjacency matrix of $\Gamma$. 

\begin{definition}
    A quasi-abelian graph is a Cayley graph $\Gamma(G, S)$ in which $S$ is the union of some conjugacy classes of $G$.
\end{definition}

A class function is a function $f: G \rightarrow \C$ that is constant on conjugacy classes of $G$. It is not hard to show that the Fourier transform of a class function $f$ is a diagonal matrix, e.g., see \cite[Chapter 2]{diaconis1988group}. From the definition of quasi-abelian graphs we see that $f_S$ is always a class function. For any irreducible representation $\varrho \in \widehat{G}$ we obtain
\begin{equation}
    \label{equ:quasi-g-eigen}
    \widehat{f}_S(\varrho) = \frac{1}{\sqrt{d_\varrho \abs{G}}} \sum_{s \in S} \chi_\varrho(s) \mathds{1}_{d_\varrho}
\end{equation}
where $d_\varrho$ is the dimension of $\varrho$ and $\chi_\varrho$ is the character of $\varrho$. The following proposition is partially proved in \cite{diaconis1988group} and \cite{rockmore2002fast} with different notations. Here, we give a short proof consistent with our notations. 

\begin{theorem}
    \label{thm:quasi-diag}
    Let $\Gamma(G, S)$ be a quasi-abelian graph on a finite group $G$. Denote by $F_G$ the Fourier transform over $G$. Then
    \begin{enumerate}[label = (\alph*), align = left, leftmargin = *, itemsep = 0mm, topsep = 2mm]
    \item \label{thm:quasi-diag-m} The adjacency matrix $A(\Gamma)$ is diagonalized by $F_G$,
    \item \label{thm:quasi-diag-vec} The eigenvectors of $A(\Gamma)$ are given by $F_G^* \ket{\varrho, j, k}$ for $\varrho \in \widehat{G}$ and $1 \le j, k \le d_\varrho$,
    \item \label{thm:quasi-diag-eig} The eigenvalue corresponding to the eigenvector $F_G^* \ket{\varrho, j, k}$ is given by
        \[ \lambda_\varrho = \frac{1}{d_\varrho} \sum_{s \in S} \chi_\varrho(s). \]
    \end{enumerate}
\end{theorem}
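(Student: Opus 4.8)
The plan is to use the structure theory of the regular representation together with the fact that $A(\Gamma)$ acts on $\C^G$ as a convolution operator by $f_S$. Concretely, for a Cayley graph $\Gamma(G,S)$ the adjacency matrix satisfies $A(\Gamma)\ket{a} = \sum_{s\in S}\ket{as}$, so up to the left/right convention it implements $f\mapsto f * f_S$ (or right-multiplication by $\rho_{\mathrm{reg}}$-type operators). The key observation is that this operator is exactly $\sum_{s\in S}\rho_{\mathrm{reg}}(s^{-1})$ (or $\rho_{\mathrm{reg}}(s)$, depending on the convention fixed in the Preliminaries), so diagonalizing $A(\Gamma)$ amounts to simultaneously block-diagonalizing the regular representation — which is precisely what the Fourier transform $F_G$ does via the decomposition \eqref{equ:reg-rep}, $\rho_{\mathrm{reg}} \cong \bigoplus_{\varrho}d_\varrho\varrho$.

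**First I would** make the identification $A(\Gamma) = \sum_{s\in S}\rho_{\mathrm{reg}}(s)$ precise, checking the indexing so that the action on the vertex basis matches $A(\Gamma)\ket a = \sum_s \ket{as}$ (this is where the left-vs-right regular representation choice matters, and I'd state it explicitly). Then I would invoke the fact, cited in the excerpt from \cite{diaconis1988group}, that the Fourier transform of a class function is block-diagonal, with the block at $\varrho$ given by \eqref{equ:quasi-g-eigen}, namely $\widehat{f}_S(\varrho) = \frac{1}{\sqrt{d_\varrho|G|}}\big(\sum_{s\in S}\chi_\varrho(s)\big)\mathds{1}_{d_\varrho}$ — a scalar multiple of the identity on the $d_\varrho$-dimensional block. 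Since $F_G$ conjugates convolution by $f_S$ into (left) multiplication by the Fourier transform $\bigoplus_\varrho \widehat{f}_S(\varrho)$, and each such summand is scalar, $F_G A(\Gamma) F_G^*$ is diagonal with the scalar $c_\varrho := \sum_{s\in S}\chi_\varrho(s)$ appearing, after normalization, as $c_\varrho/d_\varrho$ — that proves \ref{thm:quasi-diag-m} and, reading off the scaling, \ref{thm:quasi-diag-eig}.

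**For part \ref{thm:quasi-diag-vec}**, once $F_G A(\Gamma) F_G^*$ is diagonal in the basis $\{\ket{\varrho,j,k}\}$ of \eqref{equ:qft-g}, the eigenvectors of $A(\Gamma)$ itself are the pullbacks $F_G^*\ket{\varrho,j,k}$, with eigenvalue depending only on $\varrho$ (not on $j,k$), consistent with the eigenspace $\X_\varrho$ having dimension $d_\varrho^2$ as a multiplicity. I would also double-check that the normalization constant $\sqrt{d_\varrho/|G|}$ in the definition of $\widehat f$ cancels correctly against the one hidden in the unitary $F_G$ so that the eigenvalue is cleanly $\lambda_\varrho = \frac{1}{d_\varrho}\sum_{s\in S}\chi_\varrho(s)$ rather than that quantity times some power of $d_\varrho/|G|$.

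**The main obstacle** is bookkeeping rather than conceptual: pinning down the exact normalization of the quantum Fourier transform $F_G$ and the left/right convention for $\rho_{\mathrm{reg}}$ so that the bare character sum $\sum_{s\in S}\chi_\varrho(s)$ comes out divided by exactly $d_\varrho$. One subtlety worth a sentence: because $S$ is symmetric ($s\in S \iff s^{-1}\in S$) and a union of conjugacy classes, $\sum_{s\in S}\chi_\varrho(s)$ is real, so the eigenvalues are real as they must be for the symmetric matrix $A(\Gamma)$ — a quick consistency check. Everything else (that $F_G$ is unitary, that class functions Fourier-transform to block-scalar matrices) may be quoted from \eqref{equ:reg-rep}, \eqref{equ:quasi-g-eigen}, and the cited references, so the proof should be short.
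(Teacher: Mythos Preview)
Your proposal is correct and follows essentially the same route as the paper: write $A(\Gamma)=\sum_{s\in S}\rho_{\mathrm{reg}}(s)$, conjugate by $F_G$ to block-diagonalize into $\bigoplus_\varrho(\varrho\otimes\mathds{1}_{d_\varrho})$, and then use that $f_S$ is a class function so each $\varrho$-block is the scalar $\lambda_\varrho\mathds{1}$, from which (a)--(c) follow. The paper's proof is slightly more streamlined in that it goes directly through the identity $F_G\rho_{\mathrm{reg}}(a)F_G^*=\bigoplus_\varrho(\varrho(a)\otimes\mathds{1}_{d_\varrho})$ and \eqref{equ:quasi-g-eigen} without separately invoking the convolution-to-multiplication picture, but the content is the same.
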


From part \ref{thm:quasi-diag-eig} of the theorem we see that the eigenvalues of $A(\Gamma)$ are only determined by the irreducible representations of $G$ and the set $S$. Each irreducible representation $\varrho$ corresponds to $d_\varrho^2$ eigenvectors. This means an eigenvalue $\lambda_\varrho$ has multiplicity at least $d_\varrho^2$. If $G$ is abelian, we always have $d_\varrho = 1$, but that does not mean the eigenvalues $\lambda_\varrho$ are distinct for different $\varrho \in \widehat{G}$.

\begin{proof}[Proof of Theorem \ref{thm:quasi-diag}]
    Let $\rho_\mathrm{reg}$ be the regular representation of $G$. Then we can write
    \[ A(\Gamma) = \sum_{s \in S} \rho_\mathrm{reg}(s) = \sum_{a \in G} f_S(a) \rho_\mathrm{reg}(a). \]
    The Fourier transform $F_G$ decomposes $\rho_\mathrm{reg}$ as
    \[ F_G \rho_\mathrm{reg}(a) F_G^* = \bigoplus_{\varrho \in \widehat{G}} (\varrho(a) \otimes \mathds{1}_{d_\varrho}). \]
    It follows from this decomposition that
    \begin{align*}
        F_G A(\Gamma) F_G^*
        & = \sum_{a \in G} \bigoplus_{\varrho \in \widehat{G}} (f_S(a) \varrho(a) \otimes \mathds{1}_{d_\varrho}) \\
        & = \bigoplus_{\varrho \in \widehat{G}} \sqrt{\frac{\abs{G}}{d_\varrho}} (\widehat{f}_S(\varrho) \otimes \mathds{1}_{d_\varrho}) \\
        & = \bigoplus_{\varrho \in \widehat{G}} (\lambda_\varrho \mathds{1}_{d_\varrho} \otimes \mathds{1}_{d_\varrho}),
    \end{align*}
    where the last equality follows from the fact that $f_S$ is a class function. This proves \ref{thm:quasi-diag-m}. Parts \ref{thm:quasi-diag-vec} and \ref{thm:quasi-diag-eig} follow from the definition of the quantum Fourier transform \eqref{equ:qft-g} and the identity \eqref{equ:quasi-g-eigen} for the characteristic function $f_S$.
\end{proof}

According to Theorem \ref{thm:quasi-diag}, for a quasi-abelian graph $\Gamma(G, S)$ we can write the adjacency matrix $A = A(\Gamma)$ as
\begin{equation}
    \label{equ:quasi-adj}
    A = \sum_{\varrho \in \widehat{G}} \; \sum_{1 \le j, k \le d_\varrho} \lambda_\varrho F_G^* \ket{\varrho, j, k}\bra{\varrho, j, k} F_G.
\end{equation}
Therefore,
\begin{equation}
    \label{equ:quasi-walk}
    e^{iAt} = \sum_{\varrho \in \widehat{G}} \sum_{1 \le j, k \le d_\varrho} e^{i\lambda_\varrho t} F_G^* \ket{\varrho, j, k}\bra{\varrho, j, k} F_G.
\end{equation}
The expansion \eqref{equ:quasi-walk} suggest that we could perform the walk $e^{iAt}$ using the following three steps:
\begin{enumerate}[leftmargin = *, itemsep = 0mm]
\item Apply the quantum Fourier transform $F_G$,
\item Apply the phase operator $U_\varrho: \ket{\varrho, j, k} \mapsto e^{i \lambda_\varrho t} \ket{\varrho, j, k}$,
\item Apply the inverse quantum Fourier transform $F_G^*$.
\end{enumerate}
Of course, this is only efficient if both the Fourier transform $F_G$ and the phase operator $U_\varrho$ can be applied efficiently. Suppose we can apply $F_G$ efficiently for a given group $G$. Then a general strategy for constructing $\varepsilon$-projectors for $A$ is as follows. For any function $g: \widehat{G} \rightarrow \C$ the operator
\[ A_g = \sum_{\varrho \in \widehat{G}} \sum_{1 \le j, k \le d_\varrho} g(\varrho) F_G^* \ket{\varrho, j, k}\bra{\varrho, j, k} F_G  \]
commutes with $A$. Suppose that $g$ satisfies the following condition:
\begin{equation}
    \label{equ:same-egn-spc}
    g(\varrho_1) = g(\varrho_2) \text{ if and only if } \lambda_{\varrho_1} = \lambda_{\varrho_2} \text{ for all } \varrho_1, \varrho_2 \in \widehat{G}
\end{equation}
Then $A_g$ has the same eigenspaces as $A$. If we can efficiently approximate $g$ with exponential accuracy, then $\mathcal{A} = \{A_g\}$ is an $\varepsilon$-projector that satisfies the conditions of Corollary \ref{cor:r-eq-1} \ref{cor:r-eq-1-poly}. If we can only approximate $g$ with polynomial accuracy, then we might need many more of these functions $g$ that satisfy \eqref{equ:same-egn-spc}. Ideally, we need to find $g_1, g_2, \dots, g_r: \widehat{G} \rightarrow \C$, with $r = \poly(\log N)$, such that $\mathcal{A} = \{ A_{g_j} \}_{1 \le j \le r}$ is an $\varepsilon$-projector for $A$ for some $\varepsilon = 1 / \poly(\log N)$. In this case, $\mathcal{A}$ satisfies the conditions of Theorem \ref{thm:main}, and we can efficiently sample from the limiting distribution of the walk $W(t) = e^{iAt}$ using $\mathcal{A}$.

\subsection{Winnie Li graphs}

Let $\F_p$ be a finite field of characteristic $p \ge 3$. For an extension $F / \F_p$ of degree $n$, the norm map $\norm_{F / \F_p}: F \rightarrow \F_p$ is defined by $\norm_{F / \F_p}(a) = a^{(p^n - 1) / (p - 1)}$, which is a homomorphism between the multiplicative groups $F^\times$ and $\F_p^\times$. Let $S = \ker(N_{F / \F_p})$, i.e., the set of elements of $F$ of norm $1$. The Winnie Li graph of $F$ over $\F_p$ is the Cayley digraph $\Gamma(F, S)$ where the vertex set is $F$ and the edge set is $\{ (a, a + s) : a \in F, s \in S \}$.

Before we get into the specifics of the structures of these graphs, recall the quantum Fourier transform over $F$. The set of additive characters of $F$ is given by
\[ \chi_{F, a}(x) = \omega_p^{\tr_{F / \F_p}(ax)}, \quad a \in F, \]
where $\tr_{F / \F_p}(x) = x + x^p + \cdots + x^{p^{n - 1}}$ is the trace map from $F$ to $\F_p$. The quantum Fourier transform of the basis element $\ket{a}$, where $a \in F$, is given by
\[ \ket{\hat{a}} = \frac{1}{\sqrt{\abs{F}}} \sum_{x \in F} \chi_{F, a}(x) \ket{x} \]

For even $n$, the graph $\Gamma$ is undirected, since $\norm_{F / \F_p}(-a) = \norm_{F / \F_p}(a)$. Let us look at the simple case where $n = 2$. The extension $F / \F_p$ is a quadratic extension, and if we assume that $-1$ is a quadratic nonresidue in $\F_p$, we can take $F = \F_p(i)$ where $i^2 = -1$. The elements of $\F_p(i)$ can be written in the form $x + iy$ for $x, y \in \F_p$, so the norm map takes the simple form
\[ \norm_{\F_q(i) / \F_p}(x + iy) = x^2 + y^2. \]
Therefore, the set $S$ of elements of norm $1$ is the set of $\F_p$-points of the circle $x^2 + y^2 = 1$. The Winnie Li graph $\Gamma(\F_p(i), S)$ is a $(p + 1)$-regular graph of $p^2$ vertices. It follows from \eqref{equ:quasi-adj} that the adjacency matrix of $\Gamma$ can be written as
\[ A(\Gamma) = \sum_{a \in F} \lambda_a \ket{\hat{a}}\bra{\hat{a}}, \]
For an element $a = u + iv \in \F_p(i)$, Theorem \ref{thm:quasi-diag} \ref{thm:quasi-diag-eig} gives
\begin{equation}
    \label{equ:winnie-eigen-2}
    \lambda_a = \sum_{\norm_{F / \F_p}(b) = 1} \chi_{F, a}(b) = \sum_{x^2 + y^2 = 1} \omega_p^{2(ux - vy)} =
    \begin{cases}
        q + 1 & \text{if } a = 0, \\
        -K(1, u^2 + v^2) & \text{if } a \ne 0.
    \end{cases}
\end{equation}
Here, $K(a, b)$ is the Kloosterman sum with parameters $a, b$, which we will briefly talk about next.

\paragraph{Kloosterman sums.}
For $a, b \in \F_p$, the exponential sum
\begin{equation}
    \label{equ:Klooster-sum}
    K(a, b) = \sum_{x \in \F_p^\times} \omega_p^{ax + bx^{-1}}
\end{equation}
is called the Kloosterman sum with respect to $a, b$. The last equality in \eqref{equ:winnie-eigen-2} follows from the definition \eqref{equ:Klooster-sum}. Since $\overline{K(a, b)} = K(a, b)$, these sum are real numbers. A well known result of Weil \cite{weil1948some} gives the bound $\abs{K(a, b)} \le 2\sqrt{p}$. When $p$ is large, estimating $K(a, b)$ is an open problem; there are no known classical or quantum algorithms that can efficiently estimate $K(a, b)$. For a multiplicative character $\chi$ of $\F_p^\times$, the $\chi$-twisted Kloosterman sum is defined by
\begin{equation}
    \label{equ:Klooster-sum-t}
    K(\chi, a, b) = \sum_{x \in \F_p^\times} \chi(x) \omega_p^{ax + bx^{-1}}.
\end{equation}
Interestingly, when $\chi$ is the quadratic character of $\F_p^\times$, the sum \eqref{equ:Klooster-sum-t} has a closed form, and is easy to estimate \cite{salie1932kloostermanschen, carlitz1953weighted}.

\paragraph{Euclidean graphs.}
The construction of the Winnie Li graph $\Gamma(\F_p(i), S)$ can be directly generalized to obtain the so called Euclidean graphs \cite{medrano1996finite}. Let $n > 0$ be an integer and $b \in \F_p$. Define the quadratic form $Q(x) = x_1^2 + x_2^2 + \cdots +x_n^2$ over $\F_p$. An Euclidean graph for $n$ and $b$ is the Cayley graph $\Gamma(\F_p^n, S_b)$ where $S_b$ is the set of solutions of $Q(x) = b$ in $\F_p^n$. For simplicity, assume $b = 1$. The adjacency matrix of $\Gamma$ is
\[ A(\Gamma) = \sum_{a \in \F_p^n} \lambda_a \ket{\hat{a}}\bra{\hat{a}} \]
where
\[ \ket{\hat{a}} = \frac{1}{\sqrt{p^n}} \sum_{x \in \F_p^n} \omega_p^{\lrang{a, x}} \ket{x} \]
is the quantum Fourier transform of $\ket{a}$, and $\lrang{a, x} = a_1x_1 + \cdots + a_nx_n$ for $a = (a_1, \dots, a_n)$ and $x = (x_1, \dots, x_n)$. The eigenvalues $\lambda_a$ are \cite{medrano1996finite}
\begin{equation}
    \label{equ:Euc-eigen}
    \lambda_a = \sum_{x \in S_1} \omega_p^{\lrang{a, x}} =
    \begin{cases}
        \abs{S_1} & \text{if } a = 0, \\
        \frac{G_1^n}{p} K\Big(\chi^n, 1, \frac{Q(a)}{4}\Big) & \text{if } a \ne 0,
    \end{cases}
\end{equation}
where $\chi$ is the quadratic character of $\F_p^\times$, and where
\[G_1 = 
\begin{cases}
    \sqrt{p} & \text{if } p \equiv 1 \bmod 4, \\
    i\sqrt{p} & \text{if } p \equiv 3 \bmod 4.
\end{cases}
\]

When $n$ is odd, it is easy to approximate the eigenvalues $\lambda_a$ with exponential accuracy, since it is easy to approximate \eqref{equ:Klooster-sum-t} when $\chi$ is the quadratic character. So, it is easy to perform the walk
\[ e^{iA(\Gamma)t} = \sum_{a \in \F_p^n} e^{i\lambda_a t} \ket{\hat{a}}\bra{\hat{a}} \]
for exponentially large $t$. Therefore, we can easily sample from the limiting distribution of $W(t)$ by running the walk for random values of $t \in [0, T]$ for a large $T$. However, when $n$ is even, we do not know how to perform $W(t)$ for large $t$. In fact, there is no known way to even perform $W(1) = e^{iA(\Gamma)}$ efficiently. We now use the general strategy introduced at the beginning of Section \ref{sec:quasi-abelian} to efficiently sample from the limiting distribution of $W(t)$ when $n$ is even. Note that since $G = \F_p^n$ is an abelian group, we have $\widehat{G} \cong G$. Therefore, we need to find functions $g: G \rightarrow \C$ that satisfy the condition in \eqref{equ:same-egn-spc}.

\begin{proposition}
    \label{prop:Euc-e-proj}
    Define the function $g: G \rightarrow \C$ as $g(x) = (Q(x) \bmod p) / p$. Then the operator
    \[ A_g = \sum_{a \in \F_p^n} g(a) \ket{\hat{a}}\bra{\hat{a}} \]
    is an $\varepsilon$-projector for $A(\Gamma)$ for $\varepsilon = 1 / p$.
\end{proposition}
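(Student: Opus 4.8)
The plan is to verify, for the singleton $\mathcal{A}=\{A_g\}$ with $\varepsilon=1/p$, the three requirements of Definition~\ref{def:eps-proj}, disposing of the two routine ones first. For $\varepsilon$-separatedness: since $r=1$, the eigenvalue vector attached to an eigenvector $\ket{\hat a}$ of $A_g$ is the single scalar $g(a)=(Q(a)\bmod p)/p$, so the requirement $\opnorm{\bm{\lambda}_j-\bm{\lambda}_k}_2\ge\varepsilon$ collapses to $|g(a)-g(b)|\ge 1/p$ whenever $g(a)\ne g(b)$; this is immediate, since every value of $g$ lies in $\{0,1/p,\dots,(p-1)/p\}$, whose distinct elements are at pairwise distance at least $1/p$. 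For the simulation requirement: by its definition $A_g$ is diagonal in the Fourier basis $\{\ket{\hat a}\}$, so, following the three-step recipe after \eqref{equ:quasi-walk}, $e^{iA_gt}$ is realized by applying the quantum Fourier transform $F_G$ over $G=\F_p^{n}$, then the diagonal phase $\ket a\mapsto e^{ig(a)t}\ket a$, then $F_G^{*}$. Here $F_G$ is a tensor product of $n$ copies of the Fourier transform over $\Z_p$, hence runs in $\poly(\log N)$ gates; and the diagonal phase is implemented by reversibly computing $Q(a)\bmod p=\sum_{i=1}^{n}a_i^{2}\bmod p$ into an $O(\log p)$-qubit ancilla — a $\poly(\log N)$ arithmetic circuit — then applying to the $j$th qubit of that register the single-qubit phase gate $\mathrm{diag}(1,e^{it2^{j}/p})$ for each bit position $j$, and uncomputing. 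Each such rotation costs only $\poly(\log N)$ gates at the precision later phase estimation will demand, so $e^{iA_gt}$ runs in $\poly(\log N)$ gates essentially independently of $t$; in particular $F(t)\in O(t)$, and $\mathcal{A}$ in fact satisfies the stronger hypothesis of Corollary~\ref{cor:r-eq-1}\ref{cor:r-eq-1-poly}.

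The substance of the proof is the third requirement, that $A_g$ have exactly the same eigenspaces as $A(\Gamma)$; by the general construction preceding the proposition it suffices to verify the condition \eqref{equ:same-egn-spc} for $g$, i.e.\ $g(a)=g(b)\iff\lambda_a=\lambda_b$ for all $a,b\in\F_p^{n}$. The crucial feature is that $n$ is even, so $\chi^{n}$ is the trivial character of $\F_p^{\times}$ and the eigenvalue formula \eqref{equ:Euc-eigen} degenerates, for $a\ne 0$, to $\lambda_a=(G_1^{n}/p)\,K(1,Q(a)/4)$ — an ordinary Kloosterman sum depending on $a$ only through the residue $Q(a)\bmod p$. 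Consequently $g(a)=g(b)$, equivalently $Q(a)\equiv Q(b)\pmod p$, forces $\lambda_a=\lambda_b$ as soon as $a,b\ne 0$; this is the ``$\Rightarrow$'' direction of \eqref{equ:same-egn-spc}.

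The ``$\Leftarrow$'' direction, together with the behaviour of the zero vector, is where I expect the real work to lie. One must show that $Q(a)\bmod p$ is not merely \emph{an} invariant of the eigenspace of $\ket{\hat a}$ but the \emph{exact} one: that $\lambda_a=\lambda_b$ implies $Q(a)\equiv Q(b)\pmod p$, which is precisely the injectivity of the map $c\mapsto(G_1^{n}/p)K(1,c)$ on $\F_p^{\times}$; and that the vertex $a=0$ is correctly isolated, namely $g^{-1}(0)=\{0\}$, which forces one to use that $Q$ is anisotropic ($Q(a)\ne0$ for $a\ne0$) and to check, via Weil's bound $|K(1,c)|\le2\sqrt p$, that the valency $\lambda_0=|S_1|=p^{n-1}-G_1^{n}/p$ is far too large to coincide with any $\lambda_a$ with $a\ne0$. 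Once \eqref{equ:same-egn-spc} has been secured, all three clauses of Definition~\ref{def:eps-proj} hold, and therefore $\mathcal{A}=\{A_g\}$ is a $(1/p)$-projector for $A(\Gamma)$, as claimed.
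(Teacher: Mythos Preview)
Your approach is the paper's: the crux of the eigenspace-matching argument is the injectivity of $c\mapsto K(1,c)$, and the paper disposes of this in one line by citing Fisher's theorem that the Kloosterman sums $K(1,b)$, $b\in\F_p$, are pairwise distinct. Your treatments of $\varepsilon$-separatedness and of the simulation of $e^{iA_gt}$ also match (the paper mentions the latter just after the proof rather than inside it).

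The slip is in your handling of $a=0$. You assert that $Q(x)=x_1^2+\cdots+x_n^2$ is anisotropic over $\F_p$, but this is false for every $n\ge3$ (Chevalley--Warning) and already for $n=2$ when $p\equiv1\pmod4$; so in general $g^{-1}(0)\ne\{0\}$. Worse, for a nonzero isotropic $a$ one has $\lambda_a=(G_1^{n}/p)\,K(1,0)=-G_1^{n}/p$, which does \emph{not} equal $\lambda_0=|S_1|=p^{n-1}-G_1^{n}/p$, so on the $Q=0$ locus the eigenspaces of $A_g$ and $A(\Gamma)$ genuinely disagree. The paper's proof does not confront this point at all---it simply asserts $\lambda_a=\lambda_b\iff Q(a)=Q(b)$ across the board---so you are actually being more scrupulous than the paper in flagging the issue; but your anisotropy resolution is incorrect. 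In the anisotropic regime (in particular the Winnie Li case $n=2$, $p\equiv3\pmod4$, which is the paper's motivating example) both arguments go through cleanly; for general even $n$ one would have to tweak $g$ slightly, e.g.\ by assigning $a=0$ a value distinct from the rest of the $Q=0$ cone.
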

\begin{proof}
    We first need to show that $A_g$ has the same eigenspaces as $A(\Gamma)$. It is known that the Kloosterman sums $K(1, b)$, $b \in \F_p$, are distinct \cite{fisher1990distinctness}. Therefore, according to \eqref{equ:Euc-eigen}, two eigenvalues $\lambda_a$ and $\lambda_b$ of $A(\Gamma)$ are the same if and only if $Q(a) = Q(b)$. But we also trivially have $g(a) = g(b)$ if and only if $Q(a) = Q(b)$. Note that the eigenspace corresponding to $\lambda_a$ is the set of all vectors $\ket{\hat{x}}$ such that $Q(x) = Q(a)$. 

    Now, for $\lambda_a \ne \lambda_b$ we have $Q(a) \ne Q(b)$, hence
    \[ \abs{g(a) - g(b)} = \frac{1}{p} \abs[\big]{(Q(a) - Q(b)) \bmod p} \ge \frac{1}{p}, \]
    and since we can efficiently compute $g$ with exponential accuracy, it follows that $A_g$ is a $\frac{1}{p}$-projector for $A(\Gamma)$.
\end{proof}

With $g$ as in Proposition \ref{prop:Euc-e-proj}, we can perform the walk
\[ e^{iA_gt} = \sum_{a \in \F_p^n} e^{ig(a)t} \ket{\hat{a}}\bra{\hat{a}} \]
in $O((\log t) \poly(\log N))$ operations. Corollary \ref{cor:r-eq-1} \ref{cor:r-eq-1-poly} and Proposition \ref{prop:Euc-e-proj} now give

\begin{theorem}
    \label{thm:Euc-sample}
    Let $\Gamma(\F_p^n, S)$ be an Euclidean graph with adjacency matrix $A(\Gamma)$. For a given initial state $\ket{\psi_0}$ we have
    \begin{enumerate}[label = (\alph*), leftmargin = *, itemsep = 0mm, topsep = 2mm]
        \item The limiting distribution of the walk $W(t) = e^{iA(\Gamma)t}$ on $\Gamma$ is given by
            \[ P_\infty(v \vert \psi_0) = \sum_{j = 0}^{p - 1} \abs[\Big]{\sum_{k \in S_j} \braket{v}{\hat{k}} \braket{\hat{k}}{\psi_0}}^2. \]
        \item There is a quantum algorithm that can sample from $P_\infty(\cdot \vert \psi_0)$ in $\poly(n\log p)$ operations. 
    \end{enumerate}
\end{theorem}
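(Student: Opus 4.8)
The plan is to derive part (a) from the general limiting-distribution formula \eqref{equ:lim-dist} once the eigenspaces of $A(\Gamma)$ are pinned down, and to derive part (b) by feeding the $\varepsilon$-projector of Proposition \ref{prop:Euc-e-proj} into Corollary \ref{cor:r-eq-1} \ref{cor:r-eq-1-poly}. For (a), since $\F_p^n$ is abelian, Theorem \ref{thm:quasi-diag} \ref{thm:quasi-diag-vec} identifies the eigenstates of $A(\Gamma)$ with the Fourier vectors $\ket{\hat a}$, $a \in \F_p^n$, having eigenvalue $\lambda_a$ as in \eqref{equ:Euc-eigen}. I would first record, exactly as in the proof of Proposition \ref{prop:Euc-e-proj}, that for even $n$ the character $\chi^n$ is trivial, so $\lambda_a$ is a fixed nonzero multiple of the Kloosterman sum $K(1, Q(a)/4)$ for $a \neq 0$; combining the distinctness of Kloosterman sums \cite{fisher1990distinctness} with the invertibility of $4$ modulo $p$ shows that $\lambda_a = \lambda_b$ if and only if $Q(a) \equiv Q(b) \pmod p$. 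Hence the eigenspace indexed by a value $j \in \{0,\dots,p-1\}$ is $\mathrm{span}\{\ket{\hat k} : k \in S_j\}$, i.e.\ the index sets $I_j$ in \eqref{equ:lim-dist} are precisely the level sets $S_j$. Substituting $\ket{\phi_k} = \ket{\hat k}$ and $I_j = S_j$ into \eqref{equ:lim-dist} (and letting an empty $S_j$ contribute a zero term) gives the claimed expression for $P_\infty$.

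For (b), Proposition \ref{prop:Euc-e-proj} supplies the singleton $\varepsilon$-projector $\mathcal{A} = \{A_g\}$ with $g(a) = (Q(a)\bmod p)/p$ and $\varepsilon = 1/p$, so by Corollary \ref{cor:r-eq-1} \ref{cor:r-eq-1-poly} it suffices to show $e^{iA_g t}$ is implementable in $O((\log t)\,\poly(\log N))$ gates. I would write $e^{iA_g t} = F^* D_t F$, where $F$ is the quantum Fourier transform over the additive group of $\F_p^n$ --- the $n$-fold tensor power of the QFT over $\Z/p\Z$, implementable (exactly, or to exponentially small error) in $\poly(n\log p) = \poly(\log N)$ gates by the standard construction --- and $D_t\colon \ket{a}\mapsto e^{ig(a)t}\ket{a}$. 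The phase $D_t$ is applied by computing $Q(a) \bmod p = \sum_i a_i^2 \bmod p$ into an ancilla ($\poly(n,\log p)$ gates), effecting the rotation $e^{i(Q(a)\bmod p)\,t/p}$ by controlled phases to sufficient precision ($\poly(\log t,\log p, n)$ gates), and uncomputing the ancilla; this is $O((\log t)\,\poly(\log N))$ overall. Corollary \ref{cor:r-eq-1} \ref{cor:r-eq-1-poly} then yields a sampler running in $O(\log(\varepsilon^{-1})\,\poly(\log N)) = O(\log p \cdot \poly(n\log p)) = \poly(n\log p)$ operations, using $\log N = n\log p$ and $\varepsilon^{-1} = p$.

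The eigenspace bookkeeping in (a) and the phase-arithmetic circuit in (b) are routine; the one genuinely load-bearing external input --- and the only place I expect any difficulty --- is the distinctness of Kloosterman sums \cite{fisher1990distinctness}, which is exactly what guarantees that $A_g$ and $A(\Gamma)$ share eigenspaces and that the level sets $S_j$ are the correct index sets. Without it both parts collapse, so I would make sure the even-$n$ hypothesis (which forces $\chi^n$ trivial, reducing $\lambda_a$ to an ordinary rather than a $\chi$-twisted Kloosterman sum) is invoked explicitly at that point.
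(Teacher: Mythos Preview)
Your proposal is correct and follows exactly the route the paper takes: the paper's proof is the single sentence ``Corollary \ref{cor:r-eq-1} \ref{cor:r-eq-1-poly} and Proposition \ref{prop:Euc-e-proj} now give [the theorem],'' together with the immediately preceding remark that $e^{iA_gt}$ can be performed in $O((\log t)\poly(\log N))$ operations, and you have simply unpacked these citations (identifying the eigenspaces via distinctness of Kloosterman sums for part (a), and spelling out the $F^*D_tF$ circuit for part (b)). Your observation that the even-$n$ hypothesis is load-bearing---needed to reduce $\lambda_a$ to an ordinary Kloosterman sum so that \cite{fisher1990distinctness} applies---is accurate and is indeed an implicit assumption the paper carries over from the discussion preceding Proposition~\ref{prop:Euc-e-proj} without restating it in the theorem.
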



\section{Isogeny Graphs}
\label{sec:isogeny}

As another application of Algorithm \ref{alg:sampling} we consider a class of graphs, called isogeny graphs, in this section. Isogeny graphs have attracted much attention in last two decade mainly because of their applications in cryptography \cite{de2017mathematics, de2018exploring}. We will see in the following how the beautiful theory of Hecke algebras give us natural $\varepsilon$-projectors that make it possible to efficiently sample from the limiting distributions of quantum walks on these graphs.

The $\varepsilon$-projectors considered in this section, were first implicitly used by Kane, Sharif and Silverberg \cite{kane2021quantum}. They considered the same set of operators we use here but in the context of quaternion algebras. In here, we consider the space of supersingular elliptic curves over the finite fields $\F_{p^2}$, whereas in \cite{kane2021quantum} they considered the space of ideals in an ideal class of the quaternion algebra $B_{p, \infty}$. These two spaces are mathematically essentially the same, in a precise sense, but they are vastly different from a cryptographic perspective. In particular, the isogeny problem is easy in the latter space, but believed to be hard in the former \cite{kohel2014quaternion}. Working with the $\varepsilon$-projectors in the space of elliptic curves yields a potential solution to the problem of generating an honest curve, as we will explain in Section \ref{sec:honest-vurve}.

Let $\F_q = \F_{p^2}$ be a finite field of characteristic $p \ge 5$. An elliptic curve $E$ over $\F_q$ is a projective smooth curve of genus one. For the definition of these terms and an extensive introduction to elliptic curve see \cite{husemoller2013elliptic}. The affine version of $E$ is usually written as the cubic $y^2 = x^3 + ax + b$, $a, b \in \F_q$, known as the Weierstrass equation of $E$. The set of points on $E$ in any extension of $\F_p$ form an abelian group. An elliptic curve $E$ over $\F_q$ is called supersingular if it has no nontrivial points of order $p$. It can be shown that any supersingular elliptic curves over the algebraic closure $\overline{\F}_p$ can be defined over $\F_q$, or, more precisely, is isomorphic to a curve over $\F_q$. Therefore, we always assume that any supersingular elliptic curve $E$ has its coefficients $a, b$ in $\F_q$.

An isogeny $\phi: E_1 \rightarrow E_2$ between two elliptic curves $E_1$ and $E_2$ is a rational function that is also a homomorphism of groups of points on $E_1$ and $E_2$. An isogeny $\phi$ induces an embedding of function fields $\phi^*: K(E_2) \rightarrow K(E_1)$ defined by $\phi^*(f) = f \circ \phi$. The degree of $\phi$ is the degree of the extension $K(E_1) / \phi^*K(E_2)$. An isogeny of degree $\ell$ is called an $\ell$-isogeny. For any isogeny $\phi$ there is a unique isogeny $\widehat{\phi}: E_2 \rightarrow E_1$ called the dual of $\phi$. Let $\ell$ be a prime different from the characteristic $p$. Define a graph $G_\ell$ with vertices the set of all $\overline{\F}_p$-isomorphism classes of supersingular elliptic curves, and edges the set of $\ell$-isogenies between the curves. $G_\ell$ is called the supersingular $\ell$-isogeny graph. Since the dual of an $\ell$-isogeny is again an $\ell$-isogeny but in the opposite direction, we usually consider $G_\ell$ as an undirected graph. For simplicity, assume that $p \equiv 1 \bmod 12$. Then $G_\ell$ is an $(\ell + 1)$-regular graph with $N = \lfloor p / 12 \rfloor$ vertices and no self loops. The adjacency matrix of $G_\ell$ is a symmetric matrix denoted by $T_\ell$ and is called the Hecke operator.

\subsection{Simulating the Hecke operators}

Let $S$ be the set of vertices of $G_\ell$, i.e., the set of isomorphism classes of supersingular elliptic curves in characteristic $p$. The Hecke operator $T_\ell$ acts on the formal abelian group
\[ M = \bigoplus_{E \in S} \Z E \]
by sending each curve $E$ to a sum of its neighbours in $G_\ell$. In the quantum setting, we consider the action of $T_\ell$ on the complex Euclidean space $\X = M \otimes_{\Z} \C$ with the basis $\{ \ket{E} \}_{E \in S}$. The operators $T_\ell$, for different values of $\ell$, are closely related to the Hecke operators acting on the space of modular forms \cite{eichler1973basis, kohel1999computing, bilyk2022cryptanalysis}, so the terminology we use here is mostly adopted from the theory of modular forms. For example, the trivial eigenvector
\[ \ket{\mathcal{E}} = \frac{1}{\sqrt{N}} \sum_{E \in S} \ket{E} \]
of $T_\ell$ is called the Eisenstein eigenform and corresponds to the eigenvalue $\lambda_\mathcal{E} = \ell + 1$. Deligne's proof of the Riemann
hypothesis for function fields \cite{deligne1974conjecture, katz1976overview} implies that the nontrivial eigenvalues of $T_\ell$ are contained in the interval $[-2\sqrt{\ell}, 2\sqrt{\ell}]$. We make the heuristic assumption that the eigenvalues of $T_\ell$ are distinct. This assumption is in fact a consequence of (the well known) Maeda's conjecture \cite{hida1997non} which states that the characteristic polynomial of $T_\ell$ is irreducible over $\Q$. We refer the reader to \cite{martin2021average, tsaknias2014possible, ghitza2012experimental}, and the references therein, for results on the computational verification of Maeda's conjecture. 

An $\ell$-isogeny can be computed in $O(\ell)$ operations over $\F_q$ using the V\'elu formulas \cite{velu1971isogenies}. When $\ell$ is small, i.e., $\ell = \poly(\log N)$, it is easy to compute the list of all the neighbours of a given curve $E$ in $G_\ell$. In particular, we can efficiently implement the isometry
\[ T: \ket{E} \rightarrow \frac{1}{\sqrt{\ell + 1}} \sum_{j = 0}^{\ell} \ket{E}\ket{E_j} \]
from $\X$ to $\X \otimes \X$, where $E_1, \dots, E_\ell$ are the $\ell + 1$ neighbours of $E$. Therefore, we can use the existing Hamiltonian simulation techniques \cite{childs2010relationship, watrous2001quantum} to efficiently approximate the unitary $e^{iT_\ell}$.

\begin{proposition}
    \label{prp:Hecke-sim}
    For any prime $\ell = \poly(\log N)$, the walk $W(t) = e^{iT_\ell t}$ on the $\ell$-isogeny graph $G_\ell$ can be performed in $O(t \poly(\log N))$ operations.
\end{proposition}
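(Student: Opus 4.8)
The plan is to realize $T_\ell$ as a sparse Hamiltonian with an efficiently implementable sparse-access oracle, and then quote off-the-shelf Hamiltonian simulation. First I would record the elementary structural facts: since $p\equiv 1\bmod 12$ the graph $G_\ell$ is $(\ell+1)$-regular with no self-loops, so each row of $T_\ell$ has at most $\ell+1=\poly(\log N)$ nonzero entries, and every entry of $T_\ell$ — the number of $\ell$-isogenies from $E$ to $E'$, equivalently the number of order-$\ell$ (cyclic, since $\ell$ is prime) subgroups of $E$ with quotient $E'$ — is a nonnegative integer bounded by $\ell+1$. Hence $T_\ell$ is $d$-sparse with $d=\ell+1$ and $\opnorm{T_\ell}_{\max}\le \ell+1$, both $\poly(\log N)$.

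Second, I would verify that $T_\ell$ admits the row oracle needed by sparse simulation. Represent a supersingular curve by its $j$-invariant in $\F_{p^2}$, which takes $O(\log p)=O(\log N)$ bits; given $\ket{E}$, the multiset of $j$-invariants of its $\ell+1$ neighbours can be produced either from V\'elu's formulas \cite{velu1971isogenies} applied to the $\ell+1$ order-$\ell$ subgroups of $E[\ell]$, or by finding the roots of the $\ell$-th modular polynomial $\Phi_\ell(j(E),Y)$ over $\F_{p^2}$; either way the cost is $\poly(\ell,\log p)=\poly(\log N)$. Fixing any canonical ordering of these neighbours turns this into the reversible map that, on input $(E,k)$ with $0\le k\le\ell$, outputs the $k$-th neighbour of $E$ — equivalently, this is exactly the isometry $T\colon\ket{E}\mapsto \tfrac{1}{\sqrt{\ell+1}}\sum_{j=0}^{\ell}\ket{E}\ket{E_j}$ described just above the statement, which is the state-preparation/neighbour-superposition map underlying the discrete-time quantum walk on the regular graph $G_\ell$.

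Third, I would feed this into the continuous-versus-discrete walk correspondence of \cite{childs2010relationship} (cf.\ also \cite{watrous2001quantum}): from $T$ one builds the associated quantum walk operator on $\X\otimes\X$ whose eigenphases on the relevant invariant subspace are $\pm\arccos\!\big(\lambda/(\ell+1)\big)$ for $\lambda$ an eigenvalue of $T_\ell$, and then phase estimation together with a phase-correction step (or the stronger sparse-simulation bounds these techniques yield) produces an operator $\epsilon$-close to $e^{iT_\ell t}$ using $O\!\big(d\,\opnorm{T_\ell}_{\max}\,t\big)$ calls to the walk operator plus lower-order terms; with $d,\opnorm{T_\ell}_{\max}=\poly(\log N)$ and $\epsilon$ fixed (or taken to be $1/\poly(\log N)$), the gate count is $O(t\,\poly(\log N))$, which is the claimed bound and is consistent with the requirement $F(t)\in O(t)$ in Definition \ref{def:eps-proj}. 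Essentially the same simulation was carried out by Kane, Sharif and Silverberg \cite{kane2021quantum} for Brandt matrices, the quaternion-algebra counterparts of the $T_\ell$.

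I expect the only real work to be bookkeeping rather than anything conceptual: pinning down a genuinely reversible and order-consistent sparse-access oracle (a fixed canonical order on the roots of $\Phi_\ell$, and accounting for the extra automorphisms at $j=0,1728$, which the hypothesis $p\equiv1\bmod12$ lets us ignore), and checking that the $\poly(\log N)$ overheads coming from $\F_{p^2}$-arithmetic, root-finding, and the Hamiltonian-simulation compiler are all independent of $t$. Everything past that is a direct appeal to the cited results.
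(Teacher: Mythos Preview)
Your proposal is correct and follows essentially the same approach as the paper: the paper's ``proof'' is the short paragraph immediately preceding the proposition, which argues that V\'elu's formulas let one list all $\ell+1$ neighbours of a given curve in $\poly(\log N)$ time, that this yields the isometry $T\colon\ket{E}\mapsto \tfrac{1}{\sqrt{\ell+1}}\sum_j\ket{E}\ket{E_j}$, and that one then invokes the Hamiltonian simulation techniques of \cite{childs2010relationship, watrous2001quantum}. Your write-up expands exactly this sketch with the bookkeeping details (sparsity bounds, oracle reversibility, the $\arccos$ phase relation, the analogy to Brandt matrices in \cite{kane2021quantum}) that the paper leaves implicit.
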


\subsection{Distribution of eigenvalues}

For different primes $\ell$ we have different isogeny graphs $G_\ell$ over $\F_q$; the set of vertices is always the same but the set of edges change with $\ell$. Therefore, we get different Hecke operators $T_\ell$ acting on the same space $\X$. More generally, Hecke operators can be defined for any integer $n > 0$. The operator $T_n$ represent the adjacency matrix of the $n$-isogeny graph $G_n$, although for non-prime $n$ one needs to be more careful about some definitions. The algebra $\mathbb{T}_\Z = \Z[\{T_n\}_{n \in \Z}]$, generated by all Hecke operators acting on $\X$, is called the Hecke algebra. Let $\mathbb{T} = \mathbb{T}_\Z \otimes_{\Z} \C$ be the Hecke algebra over $\C$. It can be shown that $\mathbb{T}$ is a commutative ring \cite[Chapter 41]{voight2021quaternion}. In particular, for any $m, n$, the operators $T_n$ and $T_m$ commute. This means all Hecke operators are simultaneously diagonalizable. 

It was proved by Serre \cite{serre1997repartition} that for large $p$, the eigenvalues of the normalized Hecke operator $T_\ell / \sqrt{\ell}$ are equidistributed in $[-2, 2]$ with respect to the measure
\[ \mu_\ell = \frac{\ell + 1}{\pi} \frac{(1 - x^2 / 4)^{1 / 2} dx}{(\ell^{1 / 2} + \ell^{-1 / 2})^2 - x^2}. \] 
Let $\ket{\phi_1}, \ket{\phi_2}, \dots, \ket{\phi_N} \in \X$ be a simultaneous basis for all Hecke operators, and let $\ell_1, \ell_2, \dots, \ell_r$ be a set of distinct primes each bounded by $\poly(\log N)$. For all $1 \le k \le r$ and $1 \le j \le N$ we have
\[\frac{1}{\sqrt{\ell_k}} T_{\ell_k} \ket{\phi_j} = \lambda_{j, k} \ket{\phi_j}\]
for some $\lambda_{j, k} \in [-2, 2]$. Define $\bm{\lambda}_j = (\lambda_{j, 1}, \lambda_{j, 2}, \dots, \lambda_{j, r})$. It was also proved in \cite{serre1997repartition} that for large $p$, the vectors $\bm{\lambda}_j$ are equidistributed in $[-2, 2]^r$ with respect to the product measure $\bm{\mu} = \prod_{k = 1}^r \mu_{\ell_k}$. This means that asymptotically we can treat the $\bm{\lambda}_j$ as independent samples from the distribution given by the measure $\bm{\mu}$. 

For large $N$, $\mu_\ell$ approaches the measure $\mu = \frac{1}{2\pi}\sqrt{4 - x^2} dx$, the semicircle distribution on $[-2, 2]$. So, when the characteristic $p$ is large enough, it is natural to assume that the vectors $\bm{\lambda}_j$ are independent random samples from $\mu^r$. We now prove that for $r \in O(\log N)$ and $\varepsilon = 1 / \sqrt{\log N}$, the set of Hecke operators $\mathcal{T} = \{T_{\ell_k} / \sqrt{\ell_k}\}_{1 \le k \le r}$ is an $\varepsilon$-projector for any Hecke operator $T_\ell$ with $\ell$ a prime number. The following proves the $\varepsilon$-separatedness of $\mathcal{T}$.

\begin{lemma}
    \label{lem:isog-e-sep}
    Let $r \ge  32\log N$, let $\ell_1, \ell_2, \dots, \ell_r$ be a set of distinct primes each bounded by $\poly(\log N)$, and let $\varepsilon = 1 / \sqrt{\log N}$. Then the set of operators $\mathcal{T} = \{ T_{\ell_k} / \sqrt{\ell_k} \}_{1 \le k \le r}$ is $\varepsilon$-separated with overwhelming probability.
\end{lemma}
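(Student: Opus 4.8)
The plan is to treat the vectors $\bm{\lambda}_j \in [-2,2]^r$ as independent samples from the product measure $\bm{\mu} = \mu^r$ (the $r$-fold semicircle law), as justified by Serre's equidistribution result, and then run a union-bound argument over all $\binom{N}{2}$ pairs $j \ne k$. For a fixed pair, I would estimate the probability that $\opnorm{\bm{\lambda}_j - \bm{\lambda}_k}_2 < \varepsilon$, i.e.\ that two independent samples from $\bm{\mu}$ land within Euclidean distance $\varepsilon$ of each other. The key observation is that this probability decays exponentially in $r$: if the coordinates must all be close, then in particular a constant fraction of them must be close, and each coordinate-closeness event is an independent event with probability bounded away from $1$.

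The main steps, in order. First I would fix a pair $j \ne k$ and condition on $\bm{\lambda}_k$; then $\bm{\lambda}_j$ is a fresh sample from $\bm{\mu}$, and I want $\Pr[\sum_{m=1}^r (\lambda_{j,m} - \lambda_{k,m})^2 < \varepsilon^2]$. Second, I would bound, for a single coordinate, the probability $p_0 := \sup_{c \in [-2,2]} \Pr_{x \sim \mu}[|x - c| < \delta]$ for an appropriate $\delta$; since the semicircle density is bounded (by $1/\pi$), we have $p_0 \le 2\delta/\pi$, so choosing $\delta$ a small absolute constant makes $p_0 \le 1/2$, say. Third — the combinatorial heart — I would note that if $\sum_m (\lambda_{j,m}-\lambda_{k,m})^2 < \varepsilon^2 = 1/\log N$ while $r \ge 32\log N$, then the number of coordinates $m$ with $|\lambda_{j,m} - \lambda_{k,m}| \ge \delta$ is at most $\varepsilon^2/\delta^2 = 1/(\delta^2 \log N)$, which for $\delta$ constant is far smaller than $r$; hence at least $r - r/2 = r/2$ coordinates (with room to spare) satisfy $|\lambda_{j,m}-\lambda_{k,m}| < \delta$. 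Since the coordinates are independent, this event has probability at most $\binom{r}{r/2} p_0^{r/2} \le 2^r p_0^{r/2} = (4p_0)^{r/2}$; with $p_0 \le 1/8$ this is at most $2^{-r/2} \le 2^{-16\log N} = N^{-16/\ln 2}$, which is superpolynomially small (and one can be less wasteful with the constants to match $r \ge 32\log N$ comfortably). Fourth, a union bound over the fewer than $N^2$ pairs gives failure probability at most $N^2 \cdot N^{-c}$ for a large constant $c$, which is still negligible.

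I expect the main obstacle to be the rigorous justification of the independence/equidistribution step: Serre's result gives equidistribution of the empirical measure of $\{\bm{\lambda}_j\}$ as $p \to \infty$, not literal independence, and it is an asymptotic statement, so turning ``the $\bm{\lambda}_j$ are approximately distributed as $\bm{\mu}$'' into a usable finite-$N$ tail bound requires care — either an effective version of equidistribution with explicit error terms (controlling discrepancy against product boxes, or against the relevant ``near-diagonal'' region $\{(\mathbf u,\mathbf v): \|\mathbf u - \mathbf v\|_2 < \varepsilon\}$), or an explicit heuristic caveat (consistent with the ``heuristic assumption'' language already used for distinctness of eigenvalues). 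The remaining steps — the single-coordinate density bound, the counting argument, and the union bound — are routine once the probabilistic model is in place, and the slack between $r \ge 32\log N$ and what the computation actually needs is comfortable enough that the constants can be chosen to make everything go through.
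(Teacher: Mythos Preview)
Your proposal is correct (modulo the same heuristic independence assumption the paper adopts) but uses a different concentration argument than the paper. The paper sets $W=\sum_{k=1}^r(\lambda_{j,k}-\lambda_{l,k})^2$, observes that each summand has mean $2$ under the semicircle law so that $\E[W]=2r$, and applies Hoeffding's inequality to the bounded variables $(\lambda_{j,k}-\lambda_{l,k})^2\in[0,16]$ to obtain
\[
\Pr[W\le\varepsilon^2]\le\exp\!\Big(-\frac{2(2r-\varepsilon^2)^2}{256r}\Big)\le N^{-1.4}
\]
for $r\ge 32\log N$. Your route instead bounds the semicircle density pointwise, infers that each coordinate-closeness event has probability bounded away from $1$, and then uses a counting argument on the number of ``close'' coordinates. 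The Hoeffding argument is a one-liner once $\E[W]$ is computed; your approach is more elementary and delivers a much stronger exponent (any fixed power of $N^{-1}$), which in fact is what one needs for the union bound over the $\binom{N}{2}$ pairs to leave an overwhelming success probability---the paper's stated single-pair bound of $N^{-1.4}$ does not by itself survive that union bound without enlarging the constant in $r\ge 32\log N$. Your explicit acknowledgment of the gap between Serre's asymptotic equidistribution and a finite-$N$ tail bound is also on point; the paper treats this as a heuristic, exactly as you anticipated.
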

\begin{proof}
    Let $Z = (X - Y)^2$ where $X$ and $Y$ are distributed according to $\mu$. We have $\mathrm{Var}(X) = \mathrm{Var}(Y) = 1$ and $\E[X] = \E[Y] = 0$. So
    \[ \E[Z] = \mathrm{Var}(X - Y) + \E[X - Y]^2 = 2. \]
    Now define $Z_k = (\lambda_{j, k} - \lambda_{l, k})^2$ for $k = 1, \dots, r$, and $W = Z_1 + Z_2 + \cdots + Z_r$. Then $\E[W] = 2r$. We have
    \begin{align}
        \Pr[\opnorm{\bm{\lambda}_j - \bm{\lambda}_l}_2 \le \varepsilon]
        & = \Pr[W \le \varepsilon^2] \nonumber \\
        & = \Pr[W - 2r \le \varepsilon^2 - 2r] \nonumber \\
        & \le \exp\Big( \frac{-2(2r - \varepsilon^2)^2}{256 r} \Big) \label{equ:hoeffding-ineq} \\
        & \le \frac{1}{N^{1.4}} \nonumber
    \end{align}
    where the inequality \eqref{equ:hoeffding-ineq} is the Hoeffding's inequality \cite{hoeffding1994probability}. 
\end{proof}

Lemma \ref{lem:isog-e-sep} and Theorem \ref{thm:main} now give

\begin{theorem}
    \label{thm:isog-sample}
    Let $\ell \ne p$ be prime, and let $G_\ell$ be the $\ell$-isogeny graph with vertices the set of supersingular elliptic curves over $\F_{p^2}$. For a given initial state $\ket{\psi_0}$ we have
    \begin{enumerate}[label = (\alph*), leftmargin = *, itemsep = 0mm, topsep = 2mm]
    \item The limiting distribution of the walk $W(t) = e^{iT_\ell t}$ on $G_\ell$ is given by
        \begin{equation}
            \label{equ:isogeny-lim-dist}
            P_\infty(E \vert \psi_0) = \sum_{j = 1}^N \abs{\braket{E}{\phi_j} \braket{\phi_j}{\psi_0}}^2
        \end{equation}
    \item There is a quantum algorithm that can sample from $P_\infty(\cdot \vert \psi_0)$ in $\poly(\log p)$ operations.
    \end{enumerate}
\end{theorem}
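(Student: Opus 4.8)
The plan is to assemble the statement from the pieces already in place: part (a) is an instance of the general limiting-distribution formula \eqref{equ:lim-dist}, and part (b) is an instance of Theorem \ref{thm:main} applied to the $\varepsilon$-projector constructed in Lemma \ref{lem:isog-e-sep}. First I would prove (a). By the heuristic assumption that the eigenvalues of $T_\ell$ are distinct (a consequence of Maeda's conjecture), each eigenspace $\X_j$ is one-dimensional, spanned by $\ket{\phi_j}$, so each index set $I_j = \{j\}$ is a singleton and $M = N$. Substituting $I_j = \{j\}$ into \eqref{equ:lim-dist} collapses the inner sum $\sum_{k \in I_j} \braket{E}{\phi_k}\braket{\phi_k}{\psi_0}$ to the single term $\braket{E}{\phi_j}\braket{\phi_j}{\psi_0}$, which gives exactly \eqref{equ:isogeny-lim-dist}. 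I would note that $\ket{\phi_j}$ can be taken to be a simultaneous eigenbasis for all Hecke operators, since the Hecke algebra is commutative, so the same basis diagonalizes both $T_\ell$ and all the $T_{\ell_k}$ used in the $\varepsilon$-projector.

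Next I would prove (b). By Lemma \ref{lem:isog-e-sep}, choosing $r = \lceil 32 \log N \rceil \in O(\log N)$ distinct primes $\ell_1, \dots, \ell_r$ each bounded by $\poly(\log N)$, the set $\mathcal{T} = \{T_{\ell_k}/\sqrt{\ell_k}\}_{1 \le k \le r}$ is $\varepsilon$-separated with overwhelming probability for $\varepsilon = 1/\sqrt{\log N}$. Each operator $T_{\ell_k}/\sqrt{\ell_k}$ has the same eigenspaces as $T_\ell$: they are all diagonalized in the common eigenbasis $\{\ket{\phi_j}\}$, and since those eigenspaces are one-dimensional (again by the distinctness heuristic), \emph{every} Hecke operator trivially has the same eigenspaces as $T_\ell$, so the eigenspace condition of Definition \ref{def:eps-proj} holds automatically. (In fact $\mathcal{T}$ tags eigenspaces even more finely than $T_\ell$ does, which is all we need — $\varepsilon$-separatedness with respect to the finer partition implies the tags distinguish the eigenspaces of $T_\ell$.) By Proposition \ref{prp:Hecke-sim}, since each $\ell_k = \poly(\log N)$, the walk $e^{iT_{\ell_k}t}$ — and hence $e^{i(T_{\ell_k}/\sqrt{\ell_k})t}$ after rescaling time — can be performed in $O(t\poly(\log N))$ operations, so $F(t) \in O(t\poly(\log N))$, meeting the efficiency requirement of Definition \ref{def:eps-proj}. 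Thus $\mathcal{T}$ is a genuine $\varepsilon$-projector for $T_\ell$.

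Finally I would invoke Theorem \ref{thm:main}: with $r \in O(\log N)$, $\varepsilon = 1/\sqrt{\log N}$, and $F(t) \in O(t\poly(\log N))$, Algorithm \ref{alg:sampling} samples from the limiting distribution in
\[ O\!\left(r\, F(2\sqrt{r}\,\varepsilon^{-1})\poly(\log N)\right) = O\!\left(\log N \cdot \sqrt{\log N}\cdot\sqrt{\log N}\cdot\poly(\log N)\right) = \poly(\log N) \]
operations, and since $N = \lfloor p/12 \rfloor$ we have $\poly(\log N) = \poly(\log p)$, giving (b). I expect the only real subtlety — and the place where the argument is heuristic rather than unconditional — to be the justification that the $\bm{\lambda}_j$ behave like independent samples from the semicircle law $\mu^r$, which is what licenses the Hoeffding bound inside Lemma \ref{lem:isog-e-sep}; Serre's equidistribution result \cite{serre1997repartition} gives this asymptotically in $p$, and combined with the Maeda-type distinctness assumption it is exactly the standard heuristic the paper has already flagged. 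Everything downstream is then a mechanical composition of the earlier results, so the proof itself is short; I would simply cite Lemma \ref{lem:isog-e-sep}, Proposition \ref{prp:Hecke-sim}, and Theorem \ref{thm:main} and carry out the running-time bookkeeping above.
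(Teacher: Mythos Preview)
Your proposal is correct and matches the paper's approach exactly: the paper simply states that the theorem follows from Lemma~\ref{lem:isog-e-sep} and Theorem~\ref{thm:main}, and your write-up spells out precisely that deduction (plus the Maeda-based collapse of \eqref{equ:lim-dist} for part (a) and the appeal to Proposition~\ref{prp:Hecke-sim} for the simulation cost). The only nitpick is notational --- in Definition~\ref{def:eps-proj} the $\poly(\log N)$ factor is separated from $F(t)$, so one should write $F(t)\in O(t)$ rather than $F(t)\in O(t\,\poly(\log N))$ --- but this does not affect the argument or the final bound.
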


\subsection{Honest hard curves}
\label{sec:honest-vurve}

A hard problem in isogeny based cryptography is to compute the endomorphism ring $\groupofend(E)$ of a given supersingular elliptic $E$. The majority of other computational assumptions can be reduced to the endomorphism ring problem \cite{wesolowski2022orientations}. Informally, a \textit{hard curve} is a random curve on $G_\ell$ with an unknown endomorphism ring.  
To classically generate a random curve on $G_\ell$, one can do the following:
\begin{enumerate}[itemsep = 0mm, parsep = 1mm, topsep = 2mm]
\item Start with a known curve $E_0$ on $G_\ell$,
\item Take a random walk of length at least $2\log p$, and
\item Return the final curve
\end{enumerate}
For small $\ell$, taking classical random walks on $G_\ell$ can be done very efficiently \cite{doliskani2017faster}. Also, it is well known that $G_\ell$ has a small diameter, and taking a walk of length $\approx 2\log p$ is enough to get close to the uniform distribution on $G_\ell$.

At the first glance, it seems that the above classical random walk on $G_\ell$ can be used to efficiently generate a hard curve. However, there is an issue with this approach that seems to be unavoidable: the random walk explicitly generates a path on $G_\ell$. This path can be used to compute the endomorphism ring of the returned curve. More precisely, if the endomorphism ring of the initial curve $E_0$ is known and we are given a path $\phi: E_0 \rightarrow E$ (which is an isogeny) then we can use $\phi$ to compute $\groupofend(E)$. Any such path $\phi$ is called a backdoor for $E$. A hard curve without a backdoor is called an honest hard curve. 

There is no known efficient classical solution for generating an honest hard curve. A potential solution using quantum walks was first discussed in \cite{booher2022failing}. The idea is to sample from the limiting distribution of the walk $W(t) = e^{iT_\ell t}$ on $G_\ell$. According to Theorem \ref{thm:isog-sample}, this can be done efficiently, and in contrast to the classical walk, the quantum walk does not generate any path on $G_\ell$. Two important questions about the quantum walk solution in \cite{booher2022failing} remained unresolved. We briefly address those questions here.

The first question is weather the distribution \eqref{equ:isogeny-lim-dist} is close to uniform. Intuitively, there is no reason to believe that the eigenvectors $\ket{\phi_j}$ are localized\footnote{A unit vector is localized if the mass of the vector is concentrated on a small subset of entries.}. In particular, the action of the Hecke operator $T_\ell$ on $\X$ implies that each entry of $\ket{\phi_j}$ is an average of a set of $\ell + 1$ other entries. Since $\ket{\phi_j}$ is an eigenvector for all Hecke operators, these averages involve all the other entries for large enough $\ell$. Therefore, one would expect that the entries of $\ket{\phi_j}$ are not too small or too large, and that the distribution $P_\infty(\cdot \vert \psi_0)$ is not concentrated on a small subset of vertices. In fact, it is not hard to show, using general techniques in the theory of Markov chains, that $P_\infty(E_1 \vert E_2) \ge N^{-2}$ for every two curves $E_1, E_2$ \cite{richter2007almost}. Notwithstanding, a rigorous proof that $P_\infty(\cdot \vert \psi_0)$ is close to the uniform distribution does not seem straightforward. Maybe there are ways to analyze the amplitudes of the vectors $\ket{\phi_j}$ through their close connection to complex modular forms, but we are not aware of any work regarding this in the literature. Instead, this question can be approached algorithmically using the double-loop technique of \cite{richter2007almost}. The double-loop algorithm for our case is:
\begin{enumerate}[leftmargin = *, parsep = 0mm, itemsep = 0mm]
\item Set $E$ to a known curve $E_0$ on $G_\ell$
\item Repeat for $k$ times
    \begin{enumerate}[leftmargin = *, topsep = 0mm, parsep = 0mm, itemsep = 0mm]
    \item Run Algorithm \ref{alg:sampling} with initial state $\ket{E}$ to obtain a curve $E_1$
    \item Set $E \leftarrow E_1$
    \end{enumerate}
\item Return $E$    
\end{enumerate}
Assuming that the maximum column distance
\[ \alpha = \max_{E_1, E_2 \in S} \opnorm{P_\infty(\cdot \vert E_1) - P_\infty(\cdot \vert E_2)}_1 \]
of $P_\infty$ is bounded by a constant smaller than $1$, we need to only set $k = \lceil \log_{1 / \alpha} \delta^{-1} \rceil$ for the distribution of the final curve $E$ to be within distance $\delta$ of the uniform distribution. Again, the assumption that $\alpha$ is always bounded by a constant $c < 1$ is not rigorously proved, but it is more plausible than the assumption that $P_\infty(\cdot \vert \phi_0)$ is close to uniform.

The second question is only specific to the sampling algorithm proposed in \cite{booher2022failing}. In that algorithm, after preparing the state \eqref{equ:many-phase-g}, the first register is measured. The measurement outcome is a random vector $\tilde{\bm{\lambda}}_j$, and the post-measurement state is the corresponding eigenstate $\ket{\phi_j}$. The state $\ket{\phi_j}$ is then measured in the vertex basis to obtain a curve $E$. The question is whether the vector $\tilde{\bm{\lambda}}_j$ reveals any information about the endomorphism ring $\groupofend(E)$ of the curve $E$. This situation is entirely avoided in Algorithm \ref{alg:sampling}; we never measure the first register. The post-measurement state of Algorithm \ref{alg:sampling} is a superposition of all the vectors $\tilde{\bm{\lambda}}_j$, $j = 1, \dots, N$, which does not seem to provide any useful information about $\groupofend(E)$.

\begin{remark}
    In a cryptography scenario, if Alice presents Bob with an alleged hard curve $E$, there is no way for Bob to know whether $E$ is an honest hard curve, or that Alice is in possession of a backdoor for $E$. In other words, Bob is unable to determine which algorithm Alice has used to generate $E$. This kind of trust issue is normally solved by a higher level cryptographic construction.
\end{remark}


\newpage
\bibliographystyle{plain}
\bibliography{references}

\newpage
\appendix

\section{Bound on the mixing time}
\label{sec:mixing:proof}

In this section, we bound the mixing time $M_\delta$ defined in \eqref{equ:mixing-time}. Our analysis closely follows that of \cite{aharonov2001quantum}. We have
\begin{align}
    \abs{P_T(v \vert \psi_0) - P_\infty(v \vert \psi_0)}
    & = \abs[\Bigg]{\sum_{k, \ell = 1 : \lambda_k  \ne \lambda_\ell}^N \braket{v}{\phi_k} \braket{\phi_k}{\psi_0} \braket{\psi_0}{\phi_\ell} \braket{\phi_\ell}{v} \frac{e^{i(\lambda_k - \lambda_\ell)T} - 1}{i(\lambda_k - \lambda_\ell)T}} \nonumber \\
    & \le \sum_{k, \ell = 1 : \lambda_k  \ne \lambda_\ell}^N \frac{2}{\abs{\lambda_k - \lambda_\ell}T} \abs{\braket{v}{\phi_k} \braket{\phi_\ell}{v}} \cdot \abs{\braket{\phi_k}{\psi_0} \braket{\psi_0}{\phi_\ell}}. \label{equ:mixing-inner}
\end{align}
Using the inequality $2\abs{ab} \le \abs{a}^2 + \abs{b}^2$ we can bound  \eqref{equ:mixing-inner} by
\[ \sum_{k, \ell = 1 : \lambda_k  \ne \lambda_\ell}^N \frac{1}{\abs{\lambda_k - \lambda_\ell}T} (\abs{\braket{v}{\phi_k}}^2 + \abs{\braket{\phi_\ell}{v}}^2) \cdot \frac{1}{2}(\abs{\braket{\phi_k}{\psi_0}}^2 + \abs{\braket{\psi_0}{\phi_\ell}}^2). \]
Summing over all $v \in V$ we obtain
\begin{align*}
    \opnorm{P_T(\cdot \vert \psi_0) - P_\infty(\cdot \vert \psi_0)}_1
    & \le \sum_{k, \ell = 1 : \lambda_k  \ne \lambda_\ell}^N \frac{\abs{\braket{\phi_k}{\psi_0}}^2 + \abs{\braket{\psi_0}{\phi_\ell}}^2}{\abs{\lambda_k - \lambda_\ell}T} \\
    & = \sum_{h, j : h \ne j}^M \; \sum_{k \in I_h, \ell \in I_j} \frac{\abs{\braket{\phi_k}{\psi_0}}^2 + \abs{\braket{\psi_0}{\phi_\ell}}^2}{\abs{\lambda_k - \lambda_\ell}T}
\end{align*}
Define $\beta_j = \sum_{k \in I_j} \abs{\braket{\phi_k}{\psi_0}}^2$ for all $j = 1, \dots, M$. Then we have $\sum_{j = 1}^M \beta_j = 1$. Also, without loss of generality, assume that $\lambda_1 \ge \lambda_2 \ge \cdots \ge \lambda_N$. Then for $h \ne j$ and any $k \in I_h$ and $\ell \in I_j$ we have $\abs{\lambda_k - \lambda_\ell} \ge \abs{h - j} \Delta$ where $\Delta$ is defined in \eqref{equ:eigen-dist}. Putting the above together we have
\begin{align*}
    \opnorm{P_T(\cdot \vert \psi_0) - P_\infty(\cdot \vert \psi_0)}_1
    & = \sum_{v \in V} \abs{P_T(v \vert \psi_0) - P_\infty(v \vert \psi_0)} \\
    & \le \sum_{h, j = 1 : h \ne j}^M \frac{\beta_h + \beta_j}{\abs{h - j} T \Delta} \\
    & = \sum_{r = 1}^{M - 1} \frac{1}{rT\Delta} \sum_{h, j : \abs{h - j} = r} \beta_h + \beta_j \\
    & \le \frac{2\ln M + 2}{T\Delta},
\end{align*}
where in the last inequality we have used the bound $\sum_{i = 1}^n 1 / n \le \ln n + 1$ for harmonic sums, and the fact that the inner sum is always $\le 2$.

\end{document}